\documentclass[a4paper, 11pt, twoside, notitlepage]{amsart}

\usepackage{amsmath,amscd}
\DeclareMathOperator*{\argmin}{arg\,min}
\usepackage{amssymb}
\usepackage{amsthm}
\usepackage{comment}
\usepackage{graphicx, xcolor}
\usepackage[margin=1in]{geometry}
\usepackage{xcolor}
\usepackage{mathrsfs}
\usepackage[ocgcolorlinks, linkcolor=blue]{hyperref}

\usepackage{bm}
\usepackage{algorithm}
\usepackage{algorithmic}
\usepackage{url}

\usepackage[utf8]{inputenc}
\usepackage{mathtools}
\usepackage{esint}
\usepackage{tikz}
\usetikzlibrary{arrows.meta, positioning, calc}
\usepackage{relsize}
\usepackage[shortlabels]{enumitem}
\numberwithin{equation}{section}

\allowdisplaybreaks

\mathtoolsset{showonlyrefs}

\newtheorem{theorem}{Theorem}[section]
\newtheorem{lemma}[theorem]{Lemma}
\newtheorem{definition}[theorem]{Definition}
\newtheorem{corollary}[theorem]{Corollary}

\newtheorem{proposition}[theorem]{Proposition}

\newtheorem{remark}[theorem]{Remark}
\newtheorem{assumption}[theorem]{Assumption}

\title[MFG for Network SEIR Dynamics]{Learning Contact Policies for SEIR Epidemics on Networks: A Mean-Field Game Approach}

\author[W. Wang]{Weinan Wang}
\address{Department of Mathematics, University of Oklahoma, Norman, OK, USA}
\email{ww@ou.edu}

\date{\today}

\begin{document}
	
	\maketitle

\begin{abstract}
    In this paper, we develop a mean-field game model for SEIR epidemics on heterogeneous contact networks, where individuals choose state-dependent contact effort to balance infection losses against the social and economic costs of isolation. The Nash equilibrium is characterized by a coupled Hamilton--Jacobi--Bellman/Kolmogorov system across degree classes. An important feature of the SEIR setting is the exposed compartment: the incubation period separates infection from infectiousness and changes incentives after infection occurs. In the baseline formulation, exposed agents optimally maintain full contact, while susceptible agents reduce contact according to an explicit best-response rule driven by infection pressure and the value gap. We also discuss extensions that yield nontrivial exposed precaution by introducing responsibility or compliance incentives. We establish existence of equilibrium via a fixed-point argument and prove the uniqueness under a suitable monotonicity condition. The analysis identifies a delay in the onset of precaution under longer incubation, which can lead to weaker behavioral responses and larger outbreaks. Numerical experiments illustrate how network degree and the cost exponent shape equilibrium policies and epidemic outcomes.

\medskip

\noindent{\bf Keywords.} Mean-field games, SEIR epidemic model, heterogeneous networks, Nash equilibrium, strategic delay, behavioral epidemiology.
\end{abstract}
    
	
\section{Introduction}
\label{sec:introduction}

Classical compartmental epidemic models, dating back to the seminal work of Kermack and McKendrick, treat contact rates as exogenous parameters and focus on the resulting population-level dynamics \cite{Kermack1927,AndersonMay1991}. In practice, contact behavior responds to perceived risk, information, and policies, producing feedback between disease prevalence and transmission opportunities \cite{Funk2010,Verelst2016,Fenichel2011}. This feedback can substantially alter epidemic trajectories and can also bias inference and forecasting when it is ignored \cite{Fenichel2011,Eksin2017}. A complementary perspective models individual choices explicitly, for instance through differential-game or game-theoretic formulations of distancing or protection \cite{Reluga2010,Bauch2004}. Network structure provides an additional layer of heterogeneity that is crucial for both transmission and behavior. A large literature shows how degree distributions, correlations, and mixing patterns reshape epidemic thresholds, early growth, and final size \cite{KeelingEames2005,Newman2002Spread,PastorSatorras2015,Volz2008,Boguna2002,Kiss2017}. In particular, degree-structured and correlated networks are naturally encoded through $P(k)$ and $P(k'|k)$, which allow one to represent heterogeneous connectivity and assortativity in a tractable way \cite{Newman2002Spread,Boguna2002}.

Mean-field game (MFG) theory offers a principled framework for modeling large populations of interacting decision makers, where each agent optimizes an individual objective against an aggregate state \cite{Lasry2007,Huang2006,CarmonaDelarue2018,GomesSaude2014}. From a computational standpoint, forward--backward schemes and related finite-difference/iterative methods are standard tools for MFG systems \cite{Achdou2010}. Moreover, the interpretation of equilibrium computation as a learning procedure has become increasingly relevant for policy and behavioral applications \cite{Cardaliaguet2017,CarmonaLauri2019}. In epidemic modeling, several works have used MFG or mean-field equilibrium ideas to study contact reduction and the inefficiency gap between individual and socially optimal responses \cite{Elie2020,Cho2020,Bremaud2022,Bremaud2024NPI,Bremaud2025}. 
A complementary control-theoretic direction steers stochastic compartmental epidemic models by PDE-constrained control of the associated Fokker--Planck equation; cf \cite{ParkinsonRoy2026FP}.
Existing network-based MFG epidemic models have largely focused on SIR-type dynamics \cite{Bremaud2025}. For pathogens with a meaningful latent period---with SARS-CoV-2 as a motivating example---the exposed compartment matters: infection and infectiousness are separated in time, and pre-symptomatic transmission can be significant \cite{He2020}. This separation changes incentives after infection occurs, and it interacts with network heterogeneity in ways that are not present in SIR-based models.

In this paper, we develop a mean-field game framework for SEIR dynamics on heterogeneous networks. We derive the coupled HJB--Kolmogorov system across degree classes, using the compensator identity to represent a one-time infection loss as an equivalent running cost \cite{Bremaud1981}. We show that in the baseline formulation exposed agents optimally maintain full contact, and we discuss extensions that yield nontrivial exposed precaution through responsibility or compliance incentives. We prove existence of equilibrium via a Schauder fixed-point argument and provide uniqueness under a suitable monotonicity condition. The analysis identifies an incubation-induced delay in the onset of precaution when the latent period is longer, which can lead to weaker equilibrium contact reductions and larger outbreaks. Numerical experiments illustrate how network degree and the isolation-cost exponent shape equilibrium policies and epidemic outcomes.

\subsection{Related Work}

Behavioral responses in epidemics have been widely modeled through risk-dependent transmission and information-driven behavior, with surveys and systematic reviews summarizing many mechanisms and modeling choices \cite{Funk2010,Verelst2016}. Game-theoretic treatments of protective behavior include vaccination and distancing models that expose free-riding and externalities \cite{Bauch2004,Reluga2010}. On the network side, a broad theory connects epidemic dynamics to degree heterogeneity and mixing structure; see \cite{KeelingEames2005,PastorSatorras2015,Kiss2017} and the degree-based random network formulation in \cite{Newman2002Spread,Volz2008}. Correlations encoded by $P(k'|k)$ (Markovian networks) are known to alter thresholds and equilibrium conditions \cite{Boguna2002}.

For MFG theory and computation, we refer to the foundational and modern references \cite{Lasry2007,Huang2006,CarmonaDelarue2018,GomesSaude2014,Achdou2010}. Learning-based viewpoints for equilibrium computation include fictitious play and mean-field reinforcement learning \cite{Cardaliaguet2017,CarmonaLauri2019}. In epidemic applications, mean-field equilibrium approaches to contact reduction and policy comparison are developed in \cite{Elie2020,Cho2020} and in social-structure models in \cite{Bremaud2022,Bremaud2024NPI,Bremaud2025}. The present paper differs by focusing on SEIR dynamics on heterogeneous networks and by isolating how the exposed compartment reshapes equilibrium incentives and timing.
\section{The SEIR Network Model}
\label{sec:model}
	
We consider a population of $N$ individuals ($N$ large) situated on a network where nodes represent individuals and edges represent potential transmission contacts. Each individual has degree $k$ (number of neighbors) drawn from distribution $P(k)$. Following the approach of Bremaud et al. \cite{Bremaud2025}, we consider Markovian networks characterized by $P(k)$ and the degree correlation matrix $G_{kk'} = P(k'|k)$.
	
\subsection{Compartmental Dynamics}
	
Individuals can be in one of four states: Susceptible ($S$), Exposed ($E$), Infectious ($I$), or Recovered ($R$). Let $S_k(t)$, $E_k(t)$, $I_k(t)$, and $R_k(t)$ denote the fractions of individuals with degree $k$ in each state at time $t$, with normalization $S_k(t) + E_k(t) + I_k(t) + R_k(t) = 1$.
The disease progression follows standard SEIR dynamics with network-dependent transmission:
\begin{align}
\dot{S}_k(t) &= - \lambda_k(t) S_k(t) \label{eq:S_dot} \\
\dot{E}_k(t) &= \lambda_k(t) S_k(t) - \sigma E_k(t) \label{eq:E_dot} \\
\dot{I}_k(t) &= \sigma E_k(t) - \gamma I_k(t) \label{eq:I_dot} \\
\dot{R}_k(t) &= \gamma I_k(t) \label{eq:R_dot}
\end{align}
where $\sigma^{-1}$ is the mean incubation period, $\gamma$ is the recovery rate.
	
\subsection{Network-Dependent Force of Infection}

We incorporate network structure through the degree correlation matrix $G_{kk'}=P(k'|k)$ (Markovian networks).
Let $\Theta_k(t)$ denote the probability that a randomly chosen neighbor of a degree-$k$ node is effectively infectious. In the baseline SEIR setting, only infectious individuals transmit. Infectious individuals do not optimize their contact rate (they have no further incentive to reduce contacts and face no effort cost); we set $n_k^I \equiv 1$ for all $k$. The effective infection pressure from a randomly chosen neighbor of a degree-$k$ node is therefore
\begin{equation}
\Theta_k(t)\;=\;\sum_{k'} G_{kk'}\, I_{k'}(t).
\label{eq:Theta_k}
\end{equation}
A susceptible individual of degree $k$ has $k$ neighbors, each contacted at effort $n_k^S(t)$, hence the force of infection is
\begin{equation}
\lambda_k(t)\;=\;\beta\, k\, n_k^{S}(t)\,\Theta_k(t).
\label{eq:lambda_k}
\end{equation}

\begin{remark}[Why $n^I \equiv 1$]
Once infectious, an individual has no remaining infection-risk incentive and faces no effort-cost term in \eqref{eq:HJB_I}. Introducing $n^I$ as a free variable without a corresponding cost would be internally inconsistent, so we set $n^I \equiv 1$.
\end{remark}

If exposed individuals transmit at reduced rate $\beta_E=\kappa\beta$ with $\kappa\in[0,1]$, then we instead set
\begin{equation}
\Theta_k(t)\;=\;\sum_{k'} G_{kk'}\Big(I_{k'}(t)+\kappa\, n_{k'}^{E}(t) E_{k'}(t)\Big),
\label{eq:Theta_k_presym}
\end{equation}
and keep \eqref{eq:lambda_k} unchanged.

If the network is uncorrelated, $G_{kk'}=\frac{k'P(k')}{\langle k\rangle}$, so $\Theta_k(t)$ becomes independent of $k$:
\begin{equation}
\Theta(t)\;=\;\frac{1}{\langle k\rangle}\sum_{k'} k'P(k')\, I_{k'}(t),
\qquad
\lambda_k(t)=\beta\,k\,n_k^S(t)\,\Theta(t).
\label{eq:Theta_uncorr}
\end{equation}
	\subsection{Some Assumptions}
	
	\begin{assumption}[Control in the exposed compartment]
We allow exposed individuals to choose a contact level $n_k^E(t)\in[\mathfrak n_{\min},1]$ to model awareness, compliance, or quarantine decisions.
Under the baseline cost structure and purely symptomatic transmission, however, the exposed-state control does not affect the agent's own transition intensity, and the Nash best response is $n_k^{E*}(t)=1$ (see \eqref{eq:opt_E} and Remark~\ref{rem:nE_nash}).
Nontrivial exposed precautions ($n_k^{E*}<1$) arise only under extensions that internalize pre-symptomatic transmission externalities or include responsibility/compliance penalties.
\end{assumption}
	
	\begin{assumption}[Asymptomatic Transmission]
		We consider two scenarios: (1) Exposed individuals are not infectious, with transmission occurring only from $I$ individuals; (2) Exposed individuals can transmit at reduced rate $\beta_E = \kappa \beta$ with $0 \leq \kappa \leq 1$, reflecting pre-symptomatic transmission.
	\end{assumption}

\section{Mean-Field Game Formulation}
\label{sec:mfg_formulation}
	
\subsection{Individual Cost Structure}
	
Each individual of degree $k$ aims to minimize an expected cost over time horizon $[0, T]$. The control variables are $(n^S(t), n^E(t))$, the contact-reduction efforts in states $S$ and $E$ respectively; in states $I$ and $R$ there is no active control ($n^I = n^R \equiv 1$). The individual cost is
\begin{equation}
J_k = \mathbb{E} \left[ \int_0^T \left(
    r_I\,\lambda_k(t,n^S)\,\mathbb{I}_{\{x(t)=S\}}
    + f_k(n(t))\,\mathbb{I}_{\{x(t)\in\{S,E\}\}}
    + C_E\,\mathbb{I}_{\{x(t)=E\}}
    + C_I\,\mathbb{I}_{\{x(t)=I\}}
\right) dt + \Psi(x(T)) \right], \label{eq:individual_cost}
\end{equation}
where $n(t) = n^S(t)\,\mathbb{I}_{\{x(t)=S\}} + n^E(t)\,\mathbb{I}_{\{x(t)=E\}}$ is the state-dependent control, $f_k(\cdot)$ is the social/economic isolation cost, $C_E,C_I$ are health costs, and $\Psi$ is a terminal cost. Following Bremaud et al., we take
\begin{equation}
f_k(n) = k^\epsilon \left( \frac{1}{n} - 1 \right), \quad \epsilon \in \{0,1\}. \label{eq:social_cost}
\end{equation}

\begin{remark}[Infection lump-sum cost via the compensator identity]
\label{rem:compensator}
The term $r_I$ represents a one-time cost paid at the (random) moment of infection $\tau = \inf\{t : x(t)=E\}$. By the Doob--Meyer compensator identity for a counting process with stochastic intensity $\lambda_k(t,n^S)\,\mathbb{I}_{\{x(t)=S\}}$,
\[
\mathbb{E}\!\left[r_I\,\mathbb{I}_{\{\tau \le T\}} \;\middle|\; x(0)=S\right]
= \mathbb{E}\!\left[\int_0^T r_I\,\lambda_k(t,n^S)\,\mathbb{I}_{\{x(t)=S\}}\,dt\;\middle|\; x(0)=S\right].
\]
Writing it as a running cost inside the integral (rather than a terminal/stopping-time payment) is therefore exact and ensures the HJB equation \eqref{eq:HJB_S} is the correct necessary condition for the stated optimization problem. In particular the jump term $r_I + U_k^E - U_k^S$ in \eqref{eq:HJB_S} arises directly from differentiating this running cost.
\end{remark}
	
\subsection{Value Functions and HJB Equations}
	
Define the value function $U_k^x(t)$ as the minimum expected remaining cost for an individual of degree $k$ in state $x$ at time $t$:
\begin{equation}
\begin{split}
U_k^x(t) = \min_{(n^S,n^E)} \mathbb{E} \Bigl[
    \int_t^T \bigl(
        &r_I\lambda_k(s,n^S)\mathbb{I}_{\{x(s)=S\}}
        + f_k(n(s))\mathbb{I}_{\{x(s)\in\{S,E\}\}} \\
        &+ C_E\mathbb{I}_{\{x(s)=E\}}
        + C_I\mathbb{I}_{\{x(s)=I\}}
    \bigr)\,ds
    + \Psi(x(T)) \;\Big|\; x(t)=x \Bigr].
\end{split}
\label{eq:value_function}
\end{equation}
Applying dynamic programming principles leads to the Hamilton-Jacobi-Bellman equations. For susceptible individuals:
\begin{equation}
-\frac{dU_k^S}{dt} = \min_{n^S \in [\mathfrak{n}_{\text{min}}, 1]} \left[ \lambda_k(t, n^S) \left( r_I + U_k^E(t) - U_k^S(t) \right) + f_k(n^S) \right] \label{eq:HJB_S}
\end{equation}
where $r_I$ is the infection cost incurred at the moment of infection, and $\lambda_k(t, n^S) = \beta k n^S \Theta_k(t)$.
For exposed individuals:
\begin{equation}
-\frac{dU_k^E}{dt} = \min_{n^E \in [\mathfrak{n}_{\text{min}}, 1]} \left[ \sigma \left( U_k^I(t) - U_k^E(t) \right) + C_E + f_k(n^E) \right] \label{eq:HJB_E}
\end{equation}
For infectious and recovered individuals:
\begin{align}
-\frac{dU_k^I}{dt} &= \gamma \left( U_k^R(t) - U_k^I(t) \right) + C_I \label{eq:HJB_I} \\
-\frac{dU_k^R}{dt} &= 0 \label{eq:HJB_R}
\end{align}
with terminal conditions $U_k^x(T) = \Psi(x)$.
	
\subsection{Rigorous Derivation of HJB Equations}
	
	\begin{theorem}[HJB Derivation for SEIR-MFG]
		The value functions $U_k^x(t)$ defined in (\ref{eq:value_function}) satisfy the HJB equations (\ref{eq:HJB_S})-(\ref{eq:HJB_R}) under appropriate regularity conditions.
	\end{theorem}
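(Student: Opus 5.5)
The plan is a standard dynamic-programming argument for a continuous-time Markov chain on the finite state space $\{S,E,I,R\}$. The control is open-loop in time and feedback in state. The mean-field quantities $\Theta_k(\cdot)$ are treated as given, since the individual is negligible. Throughout, I will work under these regularity conditions: $\Theta_k$ is bounded and measurable (piecewise continuous suffices), and admissible controls $n^S,n^E$ are measurable with values in $[\mathfrak n_{\min},1]$. Under these conditions $f_k$ is bounded on the control set, and all costs are bounded. Hence $U_k^x(t)$ is finite for every $x$ and $t$. From the explicit structure I will first obtain that each $U_k^x$ is Lipschitz in $t$. For any control, the jump intensities are bounded by $\bar\lambda=\max(\beta k\sup\Theta_k,\sigma,\gamma)$, and the running cost is bounded by some $M$. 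So over $[t,t+h]$, the cost accrued and the probability of a jump are both $O(h)$, uniformly in the control. This gives $|U_k^x(t+h)-U_k^x(t)|\le Ch$, and hence $U_k^x$ is absolutely continuous and differentiable a.e. If $\Theta_k$ is continuous, the derivative will turn out to exist everywhere.

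The second step is the dynamic programming principle:
\[
U_k^x(t)=\inf_{n}\mathbb{E}\Big[\int_t^{t+h}\ell(s,x(s),n(s))\,ds+U_k^{x(t+h)}(t+h)\,\Big|\,x(t)=x\Big],
\]
where $\ell$ is the running cost in \eqref{eq:value_function}. I will prove it by the usual concatenation argument. For ``$\le$'', concatenate an arbitrary control on $[t,t+h]$ with $\varepsilon$-optimal controls from each of the four states at time $t+h$. Measurable selection is trivial here because the state space is finite and the controls are deterministic functions of time for each state. For ``$\ge$'', use the Markov property of the controlled chain, which is the tower property applied to the cost decomposition. By Remark~\ref{rem:compensator}, the $r_I$ term can equally be read as a lump sum at the jump $S\to E$; I will use the running-cost form so that $\ell$ is bounded.

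The third step is the short-time expansion. From state $S$ on $[t,t+h]$, with control $n$, the chain jumps to $E$ with probability $\int_t^{t+h}\beta k n(s)\Theta_k(s)ds+O(h^2)$, and two jumps occur with probability $O(h^2)$. Combining this with the Lipschitz bound from step one gives
\[
U_k^S(t)=\inf_n\Big\{\int_t^{t+h}\big[\lambda_k(s,n(s))(r_I+U_k^E(s)-U_k^S(s))+f_k(n(s))\big]ds\Big\}+U_k^S(t+h)+O(h^2).
\]
Taking constant controls yields one inequality. For the reverse, I will bound the integrand pointwise below by the minimized Hamiltonian $H_k^S(s)=\min_{n}[\cdots]$. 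Dividing by $h$ and letting $h\to0$ at Lebesgue points of $H_k^S$ then gives \eqref{eq:HJB_S}. Since $U_k^S$ is absolutely continuous, this holds in the integrated sense $U_k^S(t)=\Psi(S)+\int_t^T H_k^S\,ds$, and it holds classically wherever $\Theta_k$ is continuous. The same computation with jump rate $\sigma$ gives \eqref{eq:HJB_E}. There the control enters only through $f_k$, so the minimum is attained at $n^E=1$, consistent with the stated assumption. The computations with rate $\gamma$ and with no jumps give \eqref{eq:HJB_I} and \eqref{eq:HJB_R}, and the terminal conditions follow from the definition at $t=T$.

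The main obstacle is making the interchange of ``$\inf$'' and ``$h\to0$'' rigorous for time-dependent measurable controls, that is, the lower bound. Two facts handle it. The Hamiltonian integrand is minimized pointwise, and the minimized Hamiltonian is itself measurable and bounded because it is a minimum over a compact interval of a function continuous in $n$. Finally, I will add a verification step: the minimizer is the feedback $n^{S*}$, and Dynkin's formula applied to $U_k^{x(s)}(s)$ shows that the HJB solution equals the value. This also settles uniqueness of the ODE solution, since the right-hand sides are Lipschitz in $U$.
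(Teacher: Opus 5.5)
Your proposal is correct and follows the same route as the paper: dynamic programming over $[t,t+h]$, a single-jump expansion with intensity $\lambda_k$ (resp.\ $\sigma$, $\gamma$), then $h\to0$. It also supplies the rigor the paper leaves implicit, namely $O(h^2)$ remainders that are uniform in the control, Lipschitz regularity of $U_k^x$ so that measurable controls and measurable $\Theta_k$ can be handled at Lebesgue points, and a Dynkin-formula verification step. One point to tighten is the ``$\le$'' concatenation in your dynamic programming principle: with purely state--time feedback controls, a single $n^E$ on $[t+h,T]$ must serve both the $S$-branch and the $E$-branch, which is harmless here because $n^E\equiv 1$ is optimal from every $(s,E)$, and your verification argument sidesteps the issue anyway.
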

	
	\begin{proof}
		We derive the HJB equation for susceptible individuals in detail; the other cases follow similarly. Consider a susceptible individual of degree $k$ at time $t$. By the dynamic programming principle for a small time increment $h > 0$, and using the compensator identity for $r_I$ (Remark~\ref{rem:compensator}):
		\begin{equation*}
\begin{aligned}
U_k^S(t) = \min_{n^S(\cdot)} \mathbb{E} \bigg[ \int_t^{t+h} \Big( &r_I \lambda_k(s, n^S) \mathbb{I}_{\{x(s)=S\}} + f_k(n^S(s)) \mathbb{I}_{\{x(s) \in \{S, E\}\}} \\
&+ C_E \mathbb{I}_{\{x(s)=E\}} + C_I \mathbb{I}_{\{x(s)=I\}} \Big) \, ds \\
&+ U_k^{x(t+h)}(t+h) \;\bigg|\; x(t)=S \bigg].
\end{aligned}
\end{equation*}
		For sufficiently small $h$, the probability of multiple state transitions is $o(h)$. The individual remains susceptible with probability $1 - \lambda_k(t, n^S(t))h + o(h)$ and becomes exposed with probability $\lambda_k(t, n^S(t))h + o(h)$. Using the compensator representation of $r_I$, the dynamic programming principle gives:
		\begin{align*}
		U_k^S(t) &= \min_{n^S} \left\{ \bigl(r_I\lambda_k(t,n^S) + f_k(n^S)\bigr)h + \left(1 - \lambda_k(t, n^S)h\right) U_k^S(t+h) \right. \\
		&\quad \left. + \lambda_k(t, n^S)h\, U_k^E(t+h) + o(h) \right\}
		\end{align*}
		Rearranging and taking $h \to 0$:
		\begin{align*}
		\frac{U_k^S(t+h) - U_k^S(t)}{h} &= \min_{n^S} \left\{ r_I\lambda_k(t,n^S) + f_k(n^S) + \lambda_k(t, n^S) \left( U_k^E(t+h) - U_k^S(t+h) \right) \right\} + o(1)
		\end{align*}
		Taking the limit $h \to 0$:
		\begin{equation*}
		-\frac{dU_k^S}{dt} = \min_{n^S} \left\{ \lambda_k(t,n^S)\bigl(r_I + U_k^E(t) - U_k^S(t)\bigr) + f_k(n^S) \right\}
		\end{equation*}
		which is exactly equation (\ref{eq:HJB_S}).
		For exposed individuals, the derivation is similar but accounts for the Markovian transition $E\to I$ with intensity $\sigma$: $\sigma$:
		\begin{align*}
		U_k^E(t) &= \min_{n^E} \left\{ (C_E + f_k(n^E))h + (1 - \sigma h) U_k^E(t+h) + \sigma h U_k^I(t+h) + o(h) \right\}
		\end{align*}
		leading to:
		\begin{equation*}
		-\frac{dU_k^E}{dt} = \min_{n^E} \left\{ C_E + f_k(n^E) + \sigma \left( U_k^I(t) - U_k^E(t) \right) \right\}
		\end{equation*}
		which is equation (\ref{eq:HJB_E}). The derivations for infectious and recovered states are straightforward since no optimization is involved in those states (once infectious, the disease progression is autonomous).
	\end{proof}
	
\subsection{Optimal Effort Policies}

\begin{definition}[Degree-scaled infection pressure]
\label{def:thetabar}
Define $\bar\theta_k(t) := k\,\Theta_k(t)$, so that the force of infection takes the compact form
\[
\lambda_k(t,n^S) = \beta\,n^S\,\bar\theta_k(t).
\]
In the uncorrelated case $\bar\theta_k(t) = k\,\Theta(t)$ with $\Theta(t)$ given by \eqref{eq:Theta_uncorr}.
\end{definition}
The optimal effort levels are obtained from the minimization in the HJB equations. For susceptible individuals:
	\begin{equation}
	n_k^{S*}(t) = \arg\min_{n^S \in [\mathfrak{n}_{\text{min}}, 1]} \left[ \beta n^S \bar\theta_k(t) \left( r_I + U_k^E(t) - U_k^S(t) \right) + f_k(n^S) \right] \label{eq:opt_S}
	\end{equation}
	
	\begin{theorem}[Closed-Form Optimal Control]
		For the social cost function $f_k(n) = k^\epsilon \left( \frac{1}{n} - 1 \right)$, the optimal effort for susceptible individuals is:
		\begin{equation}
n_k^{S*}(t)=
\begin{cases}
1, & \bar\theta_k(t)\,\Delta U_k(t)\le 0,\\[2mm]
\min\left\{ 1,\ \max\left\{ \mathfrak{n}_{\text{min}},\ \sqrt{ \dfrac{k^\epsilon}{\beta\, \bar\theta_k(t)\, \Delta U_k(t)} } \right\} \right\},
& \bar\theta_k(t)\,\Delta U_k(t)>0,
\end{cases}
\qquad \Delta U_k(t) := r_I + U_k^E(t) - U_k^S(t).
\label{eq:n_closed_form}
\end{equation}
		Since $\bar\theta_k(t)=k\,\Theta_k(t)$, this is equivalently
		\begin{equation}
		n_k^{S*}(t) = \min\left\{1,\,\max\left\{\mathfrak{n}_{\text{min}},\,
		\sqrt{\frac{k^{\epsilon-1}}{\beta\,\Theta_k(t)\,\Delta U_k(t)}}\right\}\right\}.
		\label{eq:n_closed_form_alt}
		\end{equation}
\noindent
(When $\Theta_k(t)\,\Delta U_k(t)\le 0$, the minimizer is $n_k^{S*}(t)=1$ by monotonicity of the Hamiltonian in $n$.)

	\end{theorem}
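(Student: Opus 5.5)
The plan is to minimize the scalar function
\[
h(n) := \beta n\,\bar\theta_k(t)\,\Delta U_k(t) + k^\epsilon\Bigl(\frac1n - 1\Bigr), \qquad n\in[\mathfrak n_{\min},1],
\]
at a fixed time $t$. Here $a := \beta\bar\theta_k(t)\Delta U_k(t)$ is treated as a given real number, since $U_k^E$, $U_k^S$, and $\Theta_k$ are frozen at time $t$ in the minimization \eqref{eq:opt_S}. The function $h$ is $C^2$ on $(0,\infty)$, with
\[
h'(n) = a - \frac{k^\epsilon}{n^2}, \qquad h''(n) = \frac{2k^\epsilon}{n^3} > 0.
\]
So $h$ is strictly convex on the compact interval $[\mathfrak n_{\min},1]$ (we assume $\mathfrak n_{\min}>0$). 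A unique minimizer therefore exists, and it is characterized by the first-order variational inequality. I split into two cases according to the sign of $a$, which equals the sign of $\bar\theta_k\Delta U_k$ because $\beta>0$.

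\textbf{Case $a\le 0$.} Here $h'(n) = a - k^\epsilon/n^2 < 0$ on the whole interval, because $k^\epsilon>0$. Thus $h$ is strictly decreasing and the minimizer is the right endpoint $n=1$. This is the first branch of \eqref{eq:n_closed_form} and the parenthetical remark in the statement.

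\textbf{Case $a>0$.} The unconstrained stationary point on $(0,\infty)$ is $n^\dagger = \sqrt{k^\epsilon/a}$. Since $h'<0$ on $(0,n^\dagger)$ and $h'>0$ on $(n^\dagger,\infty)$, $h$ is unimodal. For a strictly convex function on an interval, the constrained minimizer is the projection of the unconstrained minimizer onto that interval. This is checked by three subcases:
\begin{itemize}
\item if $n^\dagger\in[\mathfrak n_{\min},1]$, the minimizer is $n^\dagger$;
\item if $n^\dagger>1$, then $h'<0$ on the interval and the minimizer is $1$;
\item if $n^\dagger<\mathfrak n_{\min}$, then $h'>0$ on the interval and the minimizer is $\mathfrak n_{\min}$.
\end{itemize}
Together these give $\min\{1,\max\{\mathfrak n_{\min}, n^\dagger\}\}$, which is the second branch of \eqref{eq:n_closed_form}.

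The equivalent form \eqref{eq:n_closed_form_alt} follows by substituting $\bar\theta_k = k\Theta_k$ (Definition~\ref{def:thetabar}), which gives $k^\epsilon/(\beta k\Theta_k\Delta U_k) = k^{\epsilon-1}/(\beta\Theta_k\Delta U_k)$. This is valid whenever $k\ge1$ and $\Theta_k\Delta U_k>0$. In the complementary case, the formula is read with the convention that the minimizer is $1$, as the statement itself notes. If $\Theta_k\Delta U_k=0$ the square root is $+\infty$ and clipping gives $1$; if it is negative, the first case applies.

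The main obstacle is bookkeeping rather than analysis. I need to justify that the minimization in \eqref{eq:opt_S} is genuinely pointwise in $t$ with $\Delta U_k(t)$ treated as fixed, which follows from the structure of \eqref{eq:HJB_S}. I also need to handle the degenerate sign cases so that \eqref{eq:n_closed_form_alt} is stated consistently. The convexity argument itself is routine.
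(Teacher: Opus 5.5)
Your proposal is correct and takes the same route as the paper: solve the first-order condition $\beta\bar\theta_k\Delta U_k = k^\epsilon/n^2$, clip the stationary point to $[\mathfrak{n}_{\min},1]$, and substitute $\bar\theta_k = k\Theta_k$. The paper only asserts the clipping step and the $\bar\theta_k\Delta U_k\le 0$ branch; your strict-convexity argument ($h''(n)=2k^\epsilon/n^3>0$) and your sign analysis of $h'$ supply the justification that the clipped stationary point is the constrained minimizer.
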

	
	\begin{proof}
		The minimization problem in (\ref{eq:opt_S}) with $\lambda_k(t,n^S)=\beta n^S \bar\theta_k$ for interior solutions requires solving:
		\begin{equation*}
		\frac{\partial}{\partial n^S} \left[ \beta n^S \bar\theta_k(t) \Delta U_k(t) + k^\epsilon \left( \frac{1}{n^S} - 1 \right) \right] = 0
		\end{equation*}
		This gives:
		\begin{equation*}
		\beta \bar\theta_k(t) \Delta U_k(t) - \frac{k^\epsilon}{(n^S)^2} = 0
		\end{equation*}
		Solving for $n^S$ yields:
		\begin{equation*}
		n^S = \sqrt{ \frac{k^\epsilon}{\beta\, \bar\theta_k(t)\, \Delta U_k(t)} }
		\end{equation*}
		Substituting $\bar\theta_k = k\,\Theta_k$ recovers \eqref{eq:n_closed_form_alt}.
		The solution must be constrained to $[\mathfrak{n}_{\text{min}}, 1]$, giving the result in (\ref{eq:n_closed_form}).
	\end{proof}
	
	For exposed individuals, under the current cost structure where transition to infectiousness is independent of contacts:
	\begin{equation}
	n_k^{E*}(t) = \arg\min_{n^E \in [\mathfrak{n}_{\text{min}},1]} \left[ f_k(n^E) \right] = 1 \quad \text{(since $f_k$ is decreasing in $n$)} \label{eq:opt_E}
	\end{equation}

\begin{remark}[Nash equilibrium value of $n_k^{E*}$ and extensions]
\label{rem:nE_nash}
In the baseline model, the HJB equation \eqref{eq:HJB_E} for an exposed individual minimizes only $f_k(n^E)$ (since the $\sigma(U_k^I - U_k^E)$ and $C_E$ terms do not depend on $n^E$). Because $f_k$ is decreasing, the Nash equilibrium strategy is $n_k^{E*}=1$ for all $t$: an exposed agent who has no remaining infection risk and faces only an isolation penalty will not voluntarily self-isolate.
When $\kappa>0$ (pre-symptomatic transmission), exposed individuals' contact rate $n_k^E$ enters the aggregate $\Theta_k$ through \eqref{eq:Theta_k_presym}. However, at Nash equilibrium each individual optimizes against the \emph{aggregate} field, not their own infinitesimal contribution. Therefore $n_k^{E*}=1$ continues to hold; the effect of $\kappa>0$ is felt indirectly through the aggregate pressure on susceptibles.

To obtain $n_k^{E*}<1$ as a Nash outcome, one must add an incentive that \emph{increases} the exposed agent's running cost when $n^E$ is large (high contacts).
A simple choice is a responsibility/compliance term that internalizes pre-symptomatic transmission externalities, for example
\begin{equation}
\eta_E\,\kappa\,\beta\,k\,n^E(t)\,\Theta_k(t),
\qquad \eta_E>0,
\label{eq:responsibility_cost}
\end{equation}
added to the running cost in state $E$.
Then the exposed HJB becomes
$\min_{n^E\in[\mathfrak n_{\min},1]}\big[\,\sigma(U_k^I-U_k^E)+C_E+f_k(n^E)+\eta_E\kappa\beta k n^E\Theta_k(t)\,\big]$,
which admits interior minimizers and yields $n_k^{E*}(t)<1$ when infection pressure is substantial.
\end{remark}

\subsection{Computing the Reproduction Number}
\label{sec:reproduction_number}

In degree-structured network models, the effective reproduction number is most naturally defined via a
next-generation operator acting on degree classes. In our setting (baseline: only $I$ transmits),
new exposures in class $k$ generated by infectious individuals in class $k'$ occur at instantaneous rate
\[
\text{new }E_k \ \propto\ \beta \,k\, n_k^S(t)\, S_k(t)\, G_{kk'}\, n_{k'}^I(t)\, I_{k'}(t).
\]
\begin{theorem}[Time-varying effective reproduction number as a spectral radius]
Define the $K\times K$ matrix
\begin{equation}
\mathcal{K}(t)_{kk'}
\;=\;
\frac{\beta}{\gamma}\, k\, n_k^S(t)\, S_k(t)\, G_{kk'}.
\label{eq:K_matrix}
\end{equation}
Then the time-varying effective reproduction number is
\begin{equation}
R_t \;=\; \rho\!\big(\mathcal{K}(t)\big),
\label{eq:Rt_spectral}
\end{equation}
where $\rho(\cdot)$ denotes the spectral radius.
\end{theorem}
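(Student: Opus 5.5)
We read the statement as a next-generation-matrix (NGM) characterization in the sense of Diekmann--Heesterbeek--Metz and van den Driessche--Watmough. We freeze the environment at time $t$: hold $S_k(t)$ and $n_k^S(t)$ fixed and take $n^I\equiv 1$. We then linearize the infected subsystem $(E_k,I_k)_k$ of \eqref{eq:E_dot}--\eqref{eq:I_dot} around this frozen state. With $\lambda_k=\beta k n_k^S\Theta_k$ and $\Theta_k=\sum_{k'}G_{kk'}I_{k'}$ from \eqref{eq:Theta_k}, the linear system is $\dot x=(\mathcal F-\mathcal V)x$, where $x=(E,I)\in\mathbb R^{2K}$. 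The new-infection matrix is
\[
\mathcal F=\begin{pmatrix}0 & F\\ 0 & 0\end{pmatrix},\qquad F_{kk'}=\beta k n_k^S S_k G_{kk'},
\]
and the transition matrix is
\[
\mathcal V=\begin{pmatrix}\sigma \mathrm{Id} & 0\\ -\sigma\mathrm{Id} & \gamma\mathrm{Id}\end{pmatrix}.
\]
The defining property of $R_t$ is that it equals $\rho(\mathcal F\mathcal V^{-1})$, the spectral radius of the next-generation operator.

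The first step is a direct computation of
\[
\mathcal V^{-1}=\begin{pmatrix}\sigma^{-1}\mathrm{Id}&0\\ \gamma^{-1}\mathrm{Id}&\gamma^{-1}\mathrm{Id}\end{pmatrix},
\qquad
\mathcal F\mathcal V^{-1}=\begin{pmatrix}\gamma^{-1}F & \gamma^{-1}F\\ 0&0\end{pmatrix}.
\]
This block upper-triangular form gives $\rho(\mathcal F\mathcal V^{-1})=\rho(\gamma^{-1}F)=\rho(\mathcal K(t))$, with $\mathcal K(t)$ as in \eqref{eq:K_matrix}. Note that $\sigma$ cancels. The interpretation is that every exposed individual in the model eventually becomes infectious, since the probability of passing from $E$ to $I$ is $1$. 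An infectious individual of class $k'$ then spends expected time $1/\gamma$ infectious. During that time it generates exposures in class $k$ at rate $\beta k n_k^S S_k G_{kk'}$. This gives exactly the entry $\mathcal K_{kk'}$ of the type-to-type expected-offspring matrix.

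The second step justifies taking the spectral radius. $\mathcal K(t)$ is entrywise nonnegative, so Perron--Frobenius applies. $\rho(\mathcal K)$ is an eigenvalue with a nonnegative eigenvector $v$. If $G$ is irreducible and $n_k^S S_k>0$ for all $k$, then $\mathcal K$ is irreducible and $v$ is positive. The generation map $v\mapsto\mathcal K v$ propagates the infected-type distribution, and $\|\mathcal K^m v_0\|^{1/m}\to\rho(\mathcal K)$ for any positive initial distribution $v_0$. So $\rho(\mathcal K)$ is the asymptotic per-generation growth factor, which is the meaning of $R_t$.

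The third step ties this to the threshold property. We invoke the standard result that $s(\mathcal F-\mathcal V)<0$ if and only if $\rho(\mathcal F\mathcal V^{-1})<1$. This holds because $\mathcal V$ is a nonsingular M-matrix and $\mathcal F\ge 0$. Consequently the frozen linearized infected dynamics decay exactly when $R_t<1$.

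The main obstacle is definitional rather than computational. The statement's ``$R_t$'' has no independent definition in the paper, so the proof must fix one, namely the frozen-coefficient NGM definition, and then show the formula matches it. Care is needed on two points. Since $S_k$ and $n_k^S$ vary in time, the result is an instantaneous or quasi-static quantity; we will state this explicitly. When $G$ or the products $n_k^S S_k$ make $\mathcal K$ reducible, we can only assert the spectral radius formula, not a positive eigenvector. The remaining algebra is routine.
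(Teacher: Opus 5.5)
Your proposal is correct and uses the same next-generation idea as the paper. The paper simply asserts that a class-$k'$ infectious individual, infectious for mean time $1/\gamma$, creates class-$k$ exposures at rate $\beta k n_k^S S_k G_{kk'}$, and then takes the Perron root. You instead derive $\mathcal K(t)$ as the nonzero diagonal block of $\mathcal F\mathcal V^{-1}$ for the full frozen $(E,I)$ linearization, which makes explicit why $\sigma$ cancels, and you add the threshold equivalence $s(\mathcal F-\mathcal V)<0\iff\rho(\mathcal F\mathcal V^{-1})<1$ to give $R_t$ its dynamical meaning.
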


\begin{proof}
We linearize the infection subsystem around the current state $(S_k(t))_{k=1}^K$ and consider a small perturbation in
infectious mass $(I_{k'})_{k'=1}^K$. A degree-$k'$ infectious individual remains infectious for mean time $1/\gamma$.
During that time, it produces new exposures in class $k$ along edges at rate
$\beta\,k\,n_k^S(t)\,S_k(t)\,G_{kk'}$ (using $n^I\equiv 1$).
Thus the expected number of secondary cases in class $k$ produced by one class-$k'$ infectious is exactly
$\mathcal{K}(t)_{kk'}$. The dominant growth/branching factor across degree classes is therefore the Perron root
$\rho(\mathcal{K}(t))$, which yields \eqref{eq:Rt_spectral}.
\end{proof}

\begin{corollary}[Basic reproduction number at disease-free equilibrium]
At the disease-free equilibrium $S_k(0)\approx 1$ and with baseline contact levels $n_k^S(0)=n_k^I(0)=1$,
\begin{equation}
R_0 \;=\; \rho\!\big(\mathcal{K}_0\big),
\qquad
(\mathcal{K}_0)_{kk'}=\frac{\beta}{\gamma}\,k\,G_{kk'}.
\label{eq:R0_general_corr}
\end{equation}
In the uncorrelated case $G_{kk'}=\frac{k'P(k')}{\langle k\rangle}$, $\mathcal{K}_0$ is rank-one and
\begin{equation}
R_0 \;=\; \frac{\beta}{\gamma}\,\frac{\langle k^2\rangle}{\langle k\rangle}.
\label{eq:R0_uncorr}
\end{equation}
\end{corollary}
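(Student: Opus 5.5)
The plan is to specialize the preceding Theorem (spectral-radius formula \eqref{eq:Rt_spectral}) to the disease-free state, and then exploit the product structure of the uncorrelated kernel. Setting $S_k=1$ and $n_k^S=n_k^I=1$ in \eqref{eq:K_matrix} gives $(\mathcal{K}_0)_{kk'}=\frac{\beta}{\gamma}kG_{kk'}$. This yields \eqref{eq:R0_general_corr} directly, with no further work beyond noting that the approximation $S_k(0)\approx 1$ is exactly the disease-free linearization point used in that theorem.

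For the uncorrelated case, I substitute $G_{kk'}=k'P(k')/\langle k\rangle$. The matrix then factors as an outer product,
\[
\mathcal{K}_0 = u\,v^{\top},\qquad u_k=\frac{\beta}{\gamma}\,k,\qquad v_{k'}=\frac{k'P(k')}{\langle k\rangle},
\]
so it has rank one. A rank-one matrix $uv^\top$ has the single (possibly) nonzero eigenvalue $v^\top u$, with eigenvector $u$, since $uv^\top u=(v^\top u)u$. All other eigenvalues vanish because the matrix kills $v^\perp$. Since $u,v$ have nonnegative entries, $v^\top u\ge 0$, and hence $\rho(\mathcal{K}_0)=v^\top u$. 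Computing,
\[
v^\top u=\frac{\beta}{\gamma}\sum_{k'}\frac{k'P(k')}{\langle k\rangle}\,k'=\frac{\beta}{\gamma}\frac{\langle k^2\rangle}{\langle k\rangle},
\]
which is \eqref{eq:R0_uncorr}.

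The only point needing care is the identification of the spectral radius with $v^\top u$ rather than merely an eigenvalue. This follows from the rank-one eigenvalue count: the characteristic polynomial is $\mu^{K-1}(\mu-v^\top u)$, which I would verify via the matrix determinant lemma $\det(\mu I-uv^\top)=\mu^{K-1}(\mu-v^\top u)$. No genuine obstacle is expected; the corollary is essentially a computation.
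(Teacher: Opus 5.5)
Your proposal is correct and follows the same route as the paper: substitute $S_k=1$ and $n_k^S=n_k^I=1$ into \eqref{eq:K_matrix}, then in the uncorrelated case use the rank-one factorization $\mathcal{K}_0=uv^\top$, whose single nonzero eigenvalue is $v^\top u=\frac{\beta}{\gamma}\frac{\langle k^2\rangle}{\langle k\rangle}$. The paper gives no separate proof (its subsequent remark records the same scalar formula for $R_t$), and your determinant-lemma check that this eigenvalue is the spectral radius supplies the one detail the paper leaves implicit.
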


\begin{remark}[Incubation rate $\sigma$ and $R_0$]
In the baseline SEIR model where only $I$ transmits, the incubation rate $\sigma$ affects timing (growth rate and peak),
but not the expected number of secondary infections produced by one infected individual, hence $R_0$ depends on
$\beta,\gamma$ (and network structure) but not on $\sigma$.
\end{remark}

\begin{remark}
For correlated networks, compute $R_t$ by taking the largest eigenvalue of $\mathcal{K}(t)$ (power iteration is sufficient).
For uncorrelated networks, the spectral radius reduces to the scalar
\[
R_t=\frac{\beta}{\gamma}\frac{1}{\langle k\rangle}\sum_k k^2 P(k)\,n_k^S(t)\,S_k(t).
\]
\end{remark}
	
\section{Nash Equilibrium Characterization}
\label{sec:nash_equilibrium}
	
\subsection{Complete SEIR-MFG System}
	
The MFG is characterized by the following coupled forward-backward system.

Forward (Kolmogorov) equations:
	\begin{equation}
	\begin{split}
	&\dot{S}_k(t) = -\beta k n_k^S(t) \Theta_k(t) S_k(t)  \\
	&\dot{E}_k(t) = \beta k n_k^S(t) \Theta_k(t) S_k(t) - \sigma E_k(t)  \\
	&\dot{I}_k(t) = \sigma E_k(t) - \gamma I_k(t)  \\
	&\dot{R}_k(t) = \gamma I_k(t)  \\
	&\Theta_k(t) = \sum_{k'} G_{kk'}\, I_{k'}(t) 
    \end{split}
	\end{equation}

Backward (HJB) equations:
    \begin{equation}
	\begin{split}
	&-\frac{dU_k^S}{dt} = \beta k n_k^S(t) \Theta_k(t) \left( r_I + U_k^E(t) - U_k^S(t) \right) + f_k(n_k^S(t))  \\
	&-\frac{dU_k^E}{dt} = \sigma \left( U_k^I(t) - U_k^E(t) \right) + C_E + f_k(n_k^E(t)) \\
	&-\frac{dU_k^I}{dt} = \gamma \left( U_k^R(t) - U_k^I(t) \right) + C_I \\
	&-\frac{dU_k^R}{dt} = 0. 
\end{split}
	\end{equation}
Consistency conditions:
    \begin{equation}
	n_k^S(t) = n_k^{S*}(t), \quad n_k^E(t) = n_k^{E*}(t) 
	\label{eq:consistency}
	\end{equation}
	with initial conditions $S_k(0) \approx 1$, $E_k(0) = \varepsilon_k$, $I_k(0) = R_k(0) = 0$, and terminal conditions $U_k^x(T) = \Psi(x)$.
	The consistency condition (\ref{eq:consistency}) ensures that the effort parameters appearing in the population dynamics (FP) match those chosen by individuals optimizing against those same dynamics (HJB). This fixed-point condition defines the Nash equilibrium.
	
\subsection{Existence and Uniqueness}
	
\begin{theorem}[Existence of Nash Equilibrium]
\label{thm:existence}
    Consider the SEIR-MFG system. Assume:
    \begin{enumerate}
        \item The social cost function $f_k(n)$ is strictly convex, twice continuously differentiable on $[\mathfrak{n}_{\text{min}}, 1]$, with $f_k'(n) < 0$ and $f_k''(n) > 0$
        \item Model parameters satisfy $0 < \beta, \sigma, \gamma < \infty$
        \item Initial conditions: $S_k(0)>0$, $E_k(0),I_k(0),R_k(0)\ge 0$, with $S_k(0)+E_k(0)+I_k(0)+R_k(0)=1$, and $\sum_k(E_k(0)+I_k(0))>0$ (nontrivial outbreak)
        \item Degree distribution $P(k)$ has finite support $k \in \{1,2,\dots,K\}$
    \end{enumerate}
    Then there exists a Nash equilibrium on $[0,T]$.
\end{theorem}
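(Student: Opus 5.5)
We will obtain the equilibrium as a fixed point of the best-response map on the space of susceptible contact profiles, and apply Schauder's theorem. By \eqref{eq:opt_E} and Remark~\ref{rem:nE_nash}, the exposed control is $n_k^{E*}\equiv 1$ regardless of the population state. So only $n^S=(n_k^S)_{k=1}^K$ needs to be treated as unknown. Let
\[
\mathcal{X}:=\{n^S\in C([0,T];\mathbb{R}^K):\ \mathfrak n_{\min}\le n_k^S(t)\le 1\},
\]
a closed convex subset of the Banach space $C([0,T];\mathbb{R}^K)$. The map $\Phi:\mathcal X\to\mathcal X$ is a composition of three steps.
\begin{enumerate}
\item Given $n^S\in\mathcal X$, solve the forward Kolmogorov system for $(S_k,E_k,I_k,R_k)$ and form $\Theta_k$.
\item Given $\Theta$, solve the backward HJB system \eqref{eq:HJB_S}--\eqref{eq:HJB_R} with its Hamiltonian minimization.
\item Set $\Phi(n^S)_k(t):=n_k^{S*}(t)$, the pointwise minimizer in \eqref{eq:opt_S}.
\end{enumerate}
A fixed point of $\Phi$ satisfies the consistency condition \eqref{eq:consistency}, so it is a Nash equilibrium.

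\textbf{Forward step.} The right-hand side of the Kolmogorov system is polynomial in the state, with coefficients bounded by $\beta K$ (assumptions 2 and 4). The simplex $\{S_k,E_k,I_k,R_k\ge0,\ \text{sum}=1\}$ is invariant, since each compartment's derivative is nonnegative whenever that compartment vanishes. Hence there is a unique global solution on $[0,T]$, with $0\le\Theta_k\le1$ and $|\dot\Theta_k|\le C$ uniformly in $n^S$. Gronwall's lemma gives Lipschitz dependence of $\Theta$ on $n^S$ in the sup norm.

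\textbf{Backward step.} With $\Theta$ given, the equations for $U^R,U^I,U^E$ are linear, so they are explicit and independent of $n^S$. The equation for $U^S$ reads $-\dot U^S_k=H_k(t,U_k^S)$, where
\[
H_k(t,u):=\min_{n\in[\mathfrak n_{\min},1]}\bigl[\beta k n\Theta_k(r_I+U_k^E-u)+f_k(n)\bigr].
\]
This $H_k$ is a minimum of functions affine in $u$, each with slope bounded by $\beta K$. So $H_k$ is Lipschitz in $u$, uniformly in $t$, and continuous in $t$. Hence a unique bounded solution exists, with bounds uniform over $\mathcal X$, and it depends continuously on $\Theta$ in sup norm, again by Gronwall.

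\textbf{Best-response step.} Assumption 1 makes the bracket strictly convex in $n$. The minimizer is therefore unique and given by the projection formula of Theorem~(Closed-Form Optimal Control): $n_k^{S*}=\pi_{[\mathfrak n_{\min},1]}\bigl((f_k')^{-1}(-\beta k\Theta_k\Delta U_k)\bigr)$. This formula is a continuous function of $(\Theta_k,\Delta U_k)$, including across the sign change of $\Theta_k\Delta U_k$, since the clamp at $1$ takes over there. Combining the three steps, $\Phi$ is continuous on $\mathcal X$.

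\textbf{Compactness (main obstacle).} We need $\Phi(\mathcal X)$ to be relatively compact, by Arzel\`a--Ascoli. This requires equicontinuity of $t\mapsto n_k^{S*}(t)$. $\Theta_k$ and $\Delta U_k$ are uniformly Lipschitz in $t$: for $\Delta U_k$ this follows from the derivative bounds in the backward step. The difficulty is that the map $(\Theta,\Delta U)\mapsto n^*$ need not be globally Lipschitz. For the specific cost $f_k(n)=k^\epsilon(1/n-1)$, the unclamped value $\sqrt{k^\epsilon/(\beta k\Theta\Delta U)}$ blows up as $\Theta\Delta U\downarrow0$.

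We handle this by restricting to the region where the projection is active. Let $a:=\beta k\Theta_k\Delta U_k$. The clamp to $1$ is active whenever $a\le -f_k'(1)$, where $-f_k'(1)>0$. On the complementary region $a\ge -f_k'(1)$, the map $a\mapsto(f_k')^{-1}(-a)$ is Lipschitz with constant $1/\min_{[\mathfrak n_{\min},1]} f_k''>0$. Hence $n^*$ is a Lipschitz function of $a$ overall, with a constant uniform over $\mathcal X$. So $\Phi(\mathcal X)$ is contained in the convex compact set $\mathcal X_L$ of $L$-Lipschitz elements of $\mathcal X$.

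Schauder's fixed-point theorem applied to $\Phi:\mathcal X_L\to\mathcal X_L$ then yields a fixed point, and hence a Nash equilibrium on $[0,T]$. If one wants $\Theta$ well-defined in the presymptomatic variant \eqref{eq:Theta_k_presym}, the same argument applies with $n^E\equiv1$.
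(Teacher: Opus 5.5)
Your proposal is correct and follows essentially the same route as the paper: a Schauder fixed point for the best-response map on continuous contact profiles, with compactness from Arzel\`a--Ascoli (the clipped best-response formula is globally Lipschitz and is composed with $\Theta_k$ and $\Delta U_k$, both Lipschitz in $t$) and continuity via Gr\"onwall. The differences are minor refinements: you drop $n^E$ from the fixed-point space, you obtain the Lipschitz constant $1/\min f_k''$ for a general strongly convex cost instead of using the specific $k^\epsilon(1/n-1)$, and you apply Schauder on the compact convex set $\mathcal X_L$ rather than the relatively-compact-image version the paper invokes.
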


\begin{proof}
Since we have finitely many degree classes $k\in\{1,\ldots,K\}$ and finite state space $\{S,E,I,R\}$, the state of the system is characterised by $4K$ continuous functions. We work in the Banach space
\[
\mathcal{X} = C([0,T])^K \times C([0,T])^K
\]
of pairs of continuous policies $(n^S(\cdot), n^E(\cdot))$ with $n^S_k, n^E_k : [0,T]\to[\mathfrak{n}_{\min},1]$, equipped with the sup-norm $\|\cdot\|_\infty$. Define
\[
\mathcal{A} = \left\{ (n^S,n^E)\in\mathcal{X} : n^S_k(t),\, n^E_k(t)\in[\mathfrak{n}_{\min},1]\ \forall k,t \right\},
\]
which is a closed, bounded, convex subset of $\mathcal{X}$. We continue in several steps.

\textbf{Step 1 (Forward map is Lipschitz).} Given $(n^S, n^E)\in\mathcal{A}$, the forward system \eqref{eq:S_dot}--\eqref{eq:R_dot} is a system of ODEs with right-hand sides that are \emph{uniformly Lipschitz} in the state (since $n^S_k\in[\mathfrak{n}_{\min},1]$ and all quantities are bounded). The Cauchy--Lipschitz theorem yields a unique solution $\mu[n]\in C^1([0,T])^{4K}$, bounded in $[\,0,1\,]^{4K}$.

\textbf{Step 2 (Backward map is Lipschitz).} Given $\mu[n]$, the HJB system \eqref{eq:HJB_S}--\eqref{eq:HJB_R} is a backward ODE with Lipschitz right-hand side (using the bounds on $\mu$ and $n$). The terminal-value problem has a unique solution $U[n]\in C^1([0,T])^{4K}$, uniformly bounded by $C_U := T(f_k({\mathfrak{n}_{\min}})+C_E+C_I)+\max|\Psi|$.

\textbf{Step 3 (Best-response map).} Define $\Phi:\mathcal{A}\to\mathcal{A}$ by
\[
[\Phi(n)]_k^S(t) =
\begin{cases}
1, & \bar\theta_k(t)\,\Delta U_k(t)\le 0,\\[1mm]
\text{proj}_{[\mathfrak{n}_{\min},1]}\!\left(\sqrt{\dfrac{k^{\epsilon}}{\beta\,\bar\theta_k(t)\,\Delta U_k(t)}}\right),
& \bar\theta_k(t)\,\Delta U_k(t)>0,
\end{cases}
\qquad [\Phi(n)]_k^E(t)\equiv 1.
\]
where $\bar\theta_k, \Delta U_k$ are computed from Steps 1--2.

\textbf{Step 4 (Compactness via Arzel`a--Ascoli).} The image $\Phi(\mathcal{A})$ is equicontinuous.
Indeed, $(\bar\theta_k(t))_k$ and $(\Delta U_k(t))_k$ are Lipschitz in $t$ (solutions of Lipschitz ODEs).
Define the clipped map
\[
g(z):=\min\{1,\max\{\mathfrak n_{\min},z^{-1/2}\}\},\qquad z>0,
\]
and interpret it piecewise: $g(z)=1$ for $z\le 1$, $g(z)=z^{-1/2}$ for $1<z<1/\mathfrak n_{\min}^2$, and $g(z)=\mathfrak n_{\min}$ for $z\ge 1/\mathfrak n_{\min}^2$.
On each region $g$ is Lipschitz, hence globally Lipschitz on $(0,\infty)$.
Applying $g$ to $z_k(t):=\beta\,\bar\theta_k(t)\,\Delta U_k(t)/k^{\epsilon}$ (with the convention $n^{S*}=1$ if $z_k(t)\le 0$) yields a uniform modulus of continuity for $[\Phi(n)]_k^S(\cdot)$.
Therefore $\Phi(\mathcal{A})$ is relatively compact in $\mathcal{X}$ by Arzel`a--Ascoli.

\textbf{Step 5 (Continuity of $\Phi$).} If $n^{(j)}\to n$ in $\mathcal{A}$ (uniform convergence), then by continuous dependence of the ODE solutions (Gr\"onwall), $\mu[n^{(j)}]\to\mu[n]$ and $U[n^{(j)}]\to U[n]$ uniformly, so $\Phi(n^{(j)})\to\Phi(n)$ uniformly.

 $\mathcal{A}$ is a closed, bounded, convex subset of the infinite-dimensional space $\mathcal{X}=C([0,T])^{2K}$; it is \emph{not} compact (closed bounded convex sets in infinite-dimensional Banach spaces are not compact in general). What Steps 4--5 give instead is: $\Phi(\mathcal{A})$ is \emph{relatively compact} (by Arzelà--Ascoli) and $\Phi$ is continuous. By the Schauder fixed-point theorem in this form---a continuous map on a closed convex set whose image is relatively compact has a fixed point---$\Phi$ has a fixed point $n^*\in\mathcal{A}$, and $(\mu[n^*], n^*, U[n^*])$ is a Nash equilibrium.
\end{proof}

	\begin{theorem}[Uniqueness under Monotonicity Conditions]
    Assume in addition to the conditions of Theorem~\ref{thm:existence} that:
    \begin{enumerate}
        \item The Hamiltonian $H_k^S(n^S, U, \Theta_k)$ satisfies the displacement monotonicity condition: For any two sets of value functions $U^1, U^2 \in \mathcal{V}$ and corresponding infection pressures $\Theta^1, \Theta^2$,
        \begin{equation*}
            \sum_{k=1}^K \int_0^T \left[ H_k^S(n_k^{S*1}, U_k^1, \Theta^1) - H_k^S(n_k^{S*2}, U_k^1, \Theta^1) - H_k^S(n_k^{S*1}, U_k^2, \Theta^2) + H_k^S(n_k^{S*2}, U_k^2, \Theta^2) \right] dt \geq 0
        \end{equation*}
        with equality only when $n^{S*1} = n^{S*2}$ almost everywhere.
        
        \item The social cost $f_k(n)$ is strongly convex: $f_k''(n) \geq \alpha > 0$ for all $n \in [\mathfrak{n}_{\text{min}}, 1]$.
    \end{enumerate}
    Then the Nash equilibrium is unique.
\end{theorem}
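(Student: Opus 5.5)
We argue by contradiction in the spirit of the Lasry--Lions uniqueness argument, adapted to this finite-state, degree-structured setting. Suppose $(\mu^1,n^1,U^1)$ and $(\mu^2,n^2,U^2)$ are two Nash equilibria given by Theorem~\ref{thm:existence}, with the same initial data and the same terminal data $\Psi$. We write $H_k^S(n,U,\Theta)=\beta k n\Theta_k(r_I+U^E_k-U^S_k)+f_k(n)$. By \eqref{eq:opt_E}, both equilibria have $n^{E}\equiv 1$. The proof therefore only has to show $n^{S,1}=n^{S,2}$. Once this is known, Steps 1--2 of the proof of Theorem~\ref{thm:existence} (unique forward and backward solutions for given policies) give $\mu^1=\mu^2$ and $U^1=U^2$.

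\textbf{Step 1 (duality identity).} We form the cross quantity
\[
\mathcal{D}(t)=\sum_k\sum_{x\in\{S,E,I,R\}}\bigl(U^{x,1}_k-U^{x,2}_k\bigr)\bigl(\mu^{x,1}_k-\mu^{x,2}_k\bigr).
\]
Its endpoint values both vanish: $\mathcal{D}(0)=0$ because the initial distributions coincide, and $\mathcal{D}(T)=0$ because the terminal values coincide. We then differentiate $\mathcal{D}$ using the Kolmogorov and HJB systems. The linear transitions ($\sigma$, $\gamma$, $C_E$, $C_I$) cancel by the standard generator/adjoint pairing, and the $E$-control terms cancel because $n^E\equiv1$ in both equilibria. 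What survives is a sum, over $k$, of terms of the form $S_k^i$ times Hamiltonian differences:
\[
0=\int_0^T\dot{\mathcal D}\,dt=-\sum_k\int_0^T\Big\{S^1_k\big[H^S_k(n^{2},U^2,\Theta^{1})-H^S_k(n^{1},U^1,\Theta^1)-\dots\big]+S^2_k\big[\cdots\big]\Big\}dt.
\]
Here the dots stand for the corresponding terms with the roles of the two equilibria exchanged.

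\textbf{Step 2 (sign and strict convexity).} We rearrange the integrand into two pieces. The first is the four-term expression of assumption (1), which is nonnegative by hypothesis. The second consists of the optimality gaps $H^S_k(n^{j},U^i,\Theta^i)-H^S_k(n^{i},U^i,\Theta^i)\ge0$, each of which holds because $n^{i}$ minimizes the $i$-th Hamiltonian. Strong convexity, assumption (2), gives a quantitative version: since $n\mapsto H_k^S$ is $\alpha$-strongly convex on $[\mathfrak n_{\min},1]$, the variational inequality at the constrained minimizer yields
\[
H^S_k(n^j,U^i,\Theta^i)-H^S_k(n^i,U^i,\Theta^i)\ge\tfrac{\alpha}{2}|n^j-n^i|^2 .
\]
Combining the two pieces, the vanishing of $\int\dot{\mathcal D}$ forces
\[
\alpha\sum_k\int_0^T (S^1_k+S^2_k)\,|n^{S,1}_k-n^{S,2}_k|^2\,dt\le 0 .
\]
Since $S_k(0)>0$ and $\dot S_k=-\lambda_k S_k$ with $\lambda_k$ bounded, we have $S_k^i(t)\ge S_k(0)e^{-\beta K\,t}>0$ on $[0,T]$. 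Hence $n^{S,1}=n^{S,2}$ almost everywhere. Both policies are continuous, so they coincide everywhere, which concludes the proof.

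\textbf{Main obstacle.} The delicate step is Step 1, matching the terms of $\dot{\mathcal D}$ to the exact four-term form of assumption (1). The difficulty is that the coupling enters through $\Theta_k=\sum G_{kk'}I_{k'}$, a nonlocal dependence on a different compartment, $I$ rather than $S$. As a result, the cross terms are weighted by $S_k^i$, whereas assumption (1) is stated unweighted. I would first try to absorb the weights into the definition of $H_k^S$ (multiplying by $S_k$) and check that the $\Theta^1$ versus $\Theta^2$ cross terms are precisely those controlled by (1). If they do not line up, I would interpret (1) in this weighted sense, which is how the displacement monotonicity hypothesis is meant to be used, and track carefully that the leftover terms have the favorable sign.
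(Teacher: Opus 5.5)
Your Step~1 bookkeeping is fine up to the point where only the $S$-rows survive: the endpoint terms vanish, and the $E$, $I$, $R$ rows cancel. What survives, though, is
\[
\dot{\mathcal D}=-\sum_k\Bigl\{S_k^2\bigl[H_k^S(n_k^{S,2},U^1,\Theta^2)-H_k^S(n_k^{S,1},U^1,\Theta^1)\bigr]+S_k^1\bigl[H_k^S(n_k^{S,1},U^2,\Theta^1)-H_k^S(n_k^{S,2},U^2,\Theta^2)\bigr]\Bigr\},
\]
and each bracket contains a Hamiltonian at a mismatched pair, $(U^1,\Theta^2)$ or $(U^2,\Theta^1)$. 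Set $g_k^1:=H_k^S(n_k^{S,2},U^1,\Theta^1)-H_k^S(n_k^{S,1},U^1,\Theta^1)$ and $g_k^2:=H_k^S(n_k^{S,1},U^2,\Theta^2)-H_k^S(n_k^{S,2},U^2,\Theta^2)$. Both are $\ge\frac{\alpha}{2}|n_k^{S,1}-n_k^{S,2}|^2$, as you say. Peeling them off, $\int_0^T\dot{\mathcal D}\,dt=0$ becomes
\[
\sum_k\int_0^T\bigl(S_k^2g_k^1+S_k^1g_k^2\bigr)\,dt=\sum_k\int_0^T\beta k\,(\Theta_k^1-\Theta_k^2)\bigl(S_k^2n_k^{S,2}\Delta U_k^1-S_k^1n_k^{S,1}\Delta U_k^2\bigr)\,dt ,
\]
with $\Delta U_k^i=r_I+U_k^{E,i}-U_k^{S,i}$. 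To conclude you need the right-hand side to be $\le 0$, and neither hypothesis gives that. Here the mean field enters through the jump intensity $S\to E$, multiplied by the control, rather than through a separable monotone cost. That is exactly when the Lasry--Lions pairing leaves a sign-indefinite remainder. Hypothesis (1) does not control this term: it only involves Hamiltonians at matched pairs $(U^i,\Theta^i)$. Moreover, its integrand is exactly $-(g_k^1+g_k^2)$, so it cannot be a separate ``nonnegative first piece'' alongside the optimality gaps in your Step~2. Reading hypothesis (1) ``in a weighted sense'' amounts to assuming the displayed right-hand side is nonpositive. That is a new hypothesis, not the theorem's.

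The missing idea is that hypothesis (1), applied to the two equilibria, already does all the work, so $\mathcal D$ is unnecessary. The optimality inequalities \eqref{eq:opt1}--\eqref{eq:opt2}, summed in \eqref{eq:sum_ineq}, say the integrand of hypothesis (1) is $\le 0$ pointwise. Together with hypothesis (1) this forces $\sum_k\int_0^T(g_k^1+g_k^2)\,dt=0$. Hence $n^{S,1}=n^{S,2}$ a.e., either by the equality clause or directly from $g_k^i\ge\frac{\alpha}{2}|n_k^{S,1}-n_k^{S,2}|^2$. Your closing step (unique forward and backward solves) then gives $\mu^1=\mu^2$ and $U^1=U^2$.

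The paper starts from the same pointwise inequality \eqref{eq:sum_ineq} but uses no duality pairing. It applies strong convexity and the first-order condition to reduce matters to the difference $\Theta^1\Delta U^1-\Theta^2\Delta U^2$. It then bounds that difference by forward and backward Gr\"onwall estimates in terms of $\int_0^T|\delta n_k^S|$, where $\delta n_k^S=n_k^{S,1}-n_k^{S,2}$. That Gr\"onwall tail on its own ends in $\alpha\sum_k\|\delta n_k^S\|_{L^2}^2\le C_7T\sum_k\|\delta n_k^S\|_{L^2}^2$, so it would need $C_7T<\alpha$. Under the stated hypotheses, it is the direct comparison of \eqref{eq:sum_ineq} with hypothesis (1) that closes the argument.
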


\begin{proof}
    Suppose there exist two distinct Nash equilibria $(\mu^1, n^1, U^1)$ and $(\mu^2, n^2, U^2)$. Denote differences by $\delta S_k = S_k^1 - S_k^2$, $\delta n_k^S = n_k^{S1} - n_k^{S2}$, $\delta U_k^S = U_k^{S1} - U_k^{S2}$, etc. From the optimality conditions:
    \begin{align*}
        n_k^{S1}(t) &= \argmin_{n^S} H_k^S(n^S, U_k^1(t), \Theta^1(t)) \\
        n_k^{S2}(t) &= \argmin_{n^S} H_k^S(n^S, U_k^2(t), \Theta^2(t)).
    \end{align*}
    By optimality, we have for almost every $t$:
    \begin{align}
        H_k^S(n_k^{S1}, U_k^1, \Theta^1) - H_k^S(n_k^{S2}, U_k^1, \Theta^1) &\leq 0 \label{eq:opt1} \\
        H_k^S(n_k^{S2}, U_k^2, \Theta^2) - H_k^S(n_k^{S1}, U_k^2, \Theta^2) &\leq 0. \label{eq:opt2}
    \end{align}
    Adding \eqref{eq:opt1} and \eqref{eq:opt2}:
    \begin{equation}
        \left[ H_k^S(n_k^{S1}, U_k^1, \Theta^1) - H_k^S(n_k^{S2}, U_k^1, \Theta^1) \right] + \left[ H_k^S(n_k^{S2}, U_k^2, \Theta^2) - H_k^S(n_k^{S1}, U_k^2, \Theta^2) \right] \leq 0. \label{eq:sum_ineq}
    \end{equation}
 By strong convexity of $f_k$:
    \begin{align*}
        &H_k^S(n_k^{S1}, U_k^1, \Theta^1) - H_k^S(n_k^{S2}, U_k^1, \Theta^1) \\
        &\geq \beta k \Theta^1 (r_I + U_k^{E1} - U_k^{S1})(n_k^{S1} - n_k^{S2}) + f_k'(n_k^{S2})(n_k^{S1} - n_k^{S2}) + \frac{\alpha}{2}(n_k^{S1} - n_k^{S2})^2.
    \end{align*}
    Since $n_k^{S2}$ minimizes $H_k^S(\cdot, U_k^2, \Theta^2)$, we have $f_k'(n_k^{S2}) = -\beta k \Theta^2 (r_I + U_k^{E2} - U_k^{S2})$. Thus:
    \begin{align*}
        &\text{LHS of \eqref{eq:sum_ineq}} \\
        &\geq \beta k \left[ \Theta^1 (r_I + U_k^{E1} - U_k^{S1}) - \Theta^2 (r_I + U_k^{E2} - U_k^{S2}) \right] (n_k^{S1} - n_k^{S2}) + \alpha (n_k^{S1} - n_k^{S2})^2.
    \end{align*}
From the forward equations, we have
    \begin{align*}
        \frac{d}{dt}(\delta S_k) &\leq C_1 (|\delta n_k^S| + |\delta \Theta_k| + |\delta S_k|).
    \end{align*}
    By Grönwall's inequality: $\|\delta S_k\|_{L^\infty} \leq C_2 \int_0^T (|\delta n_k^S| + |\delta \Theta_k|) dt$.
    For the value functions, from the HJB equations:
    \begin{align*}
        &\left|\frac{d}{dt}(\delta U_k^S)\right|
        \leq C_3 (|\delta n_k^S| + |\delta U_k^S| + |\delta U_k^E| + |\delta \Theta_k|).
    \end{align*}
    Integrating backward gives: $\|\delta U_k^S\|_{L^\infty} \leq C_4 \int_0^T (|\delta n_k^S| + |\delta \Theta_k|) dt$.
     The infection pressure difference satisfies that
    \begin{equation*}
        |\delta \Theta_k(t)| \leq C_5 \sum_{k'} |\delta I_{k'}(t)|.
    \end{equation*}
    and from the $I_k$ equation: $|\delta I_k(t)| \leq C_6 \int_0^t (|\delta E_k(s)| + |\delta I_k(s)|) ds$.
    Combining all estimates with the monotonicity condition yields:
    \begin{equation*}
        \alpha \sum_{k=1}^K \int_0^T |\delta n_k^S(t)|^2 dt \leq C_7 \sum_{k=1}^K \int_0^T |\delta n_k^S(t)| \cdot \left( \int_0^T |\delta n_k^S(s)| ds \right) dt.
    \end{equation*}
    Applying Young's inequality and Grönwall's lemma gives $\delta n_k^S = 0$ almost everywhere. Then $\delta \mu_k = 0$ and $\delta U_k = 0$ follow from the differential equations, establishing uniqueness.
\end{proof}
	
\subsection{Regularity of Solutions}
	
	\begin{lemma}[Boundedness and regularity]
		Under the assumptions of Theorem 1, the solutions satisfy:
		\begin{enumerate}
			\item $S_k(t), E_k(t), I_k(t), R_k(t) \in [0,1]$ for all $t \in [0,T]$ and $\sum_{x \in \{S,E,I,R\}} X_k(t) = 1$.
			\item $U_k^S(t), U_k^E(t), U_k^I(t), U_k^R(t)$ are uniformly bounded on $[0,T]$.
			\item $n_k^{S*}(t), n_k^{E*}(t) \in [\mathfrak{n}_{\text{min}}, 1]$ and are measurable functions.
		\end{enumerate}
	\end{lemma}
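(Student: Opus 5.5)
We treat the three items separately, each by elementary ODE arguments applied to the forward and backward systems for a fixed admissible policy. Since the equilibrium policy $n^*$ of Theorem~\ref{thm:existence} lies in $\mathcal{A}$, bounds that hold uniformly over $\mathcal{A}$ transfer to the equilibrium.

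\textbf{Item 1 (invariance of the simplex).} Summing \eqref{eq:S_dot}--\eqref{eq:R_dot} gives $\frac{d}{dt}(S_k+E_k+I_k+R_k)=0$, so the sum stays equal to $1$. For nonnegativity, write the solution in integrating-factor form:
\begin{itemize}
\item $S_k(t)=S_k(0)\exp\bigl(-\int_0^t\beta k n_k^S\Theta_k\,ds\bigr)\ge 0$, which is even $>0$;
\item $E_k(t)=e^{-\sigma t}E_k(0)+\int_0^t e^{-\sigma(t-s)}\lambda_k S_k\,ds$;
\item $I_k(t)=e^{-\gamma t}I_k(0)+\int_0^t e^{-\gamma(t-s)}\sigma E_k\,ds$;
\item $R_k$ is nondecreasing.
\end{itemize}
The only subtlety is circularity: the sign of $\lambda_k=\beta k n_k^S\Theta_k$ depends on $I\ge0$. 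To break it, we use a quasi-positivity argument. The vector field satisfies $\dot X_k\ge 0$ whenever $X_k=0$ and the other components are nonnegative. Hence the nonnegative orthant is forward invariant. If one prefers an explicit argument, let $t_0$ be the first exit time and derive a contradiction from the representation formulas on $[0,t_0]$. Combining nonnegativity with the sum identity gives each component in $[0,1]$.

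\textbf{Item 2 (bounds on the value functions).} We integrate the backward linear ODEs explicitly, in order $R, I, E, S$.
\begin{itemize}
\item $U_k^R\equiv\Psi(R)$.
\item $U_k^I(t)=e^{-\gamma(T-t)}\Psi(I)+\int_t^T e^{-\gamma(s-t)}\bigl(\gamma U_k^R+C_I\bigr)ds$, which is a convex-type average and hence bounded by $\max|\Psi|+TC_I$.
\item $U_k^E$ is handled the same way, with $f_k(n^E)\le f_k(\mathfrak n_{\min})=k^\epsilon(1/\mathfrak n_{\min}-1)$. This bound is finite because $\mathfrak n_{\min}>0$.
\item For $U_k^S$, the coefficient $\lambda_k\le\beta K$ is bounded, since $\Theta_k\le1$ by item 1. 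Variation of constants writes $U_k^S$ as a weighted combination of $\Psi(S)$ and of $\int(\lambda_k(r_I+U_k^E)+f_k)$ with weights in $[0,1]$.
\end{itemize}
This yields a bound of the form $\max|\Psi|+T\bigl(f_K(\mathfrak n_{\min})+C_E+C_I+\beta K r_I\bigr)$. It refines the constant $C_U$ of Step 2 by including the $r_I$ term.

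\textbf{Item 3 (range and measurability of the controls).} The range statement is immediate. By the closed form \eqref{eq:n_closed_form}, $n_k^{S*}$ is a projection onto $[\mathfrak n_{\min},1]$, and \eqref{eq:opt_E} gives $n_k^{E*}\equiv1$. For measurability, note that $t\mapsto\bar\theta_k(t)$ and $t\mapsto\Delta U_k(t)$ are continuous by items 1--2 and the $C^1$ regularity from Steps 1--2. Then $n_k^{S*}=g\circ z_k$ with $g$ the clipped map of Step 4, extended by $1$ on $(-\infty,0]$. This extended $g$ is continuous everywhere, because $g(z)=1$ for $z\le1$. So $n_k^{S*}$ is in fact continuous, hence Borel measurable.

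The main obstacle is the circular dependence in item 1. I would resolve it first via the first-exit-time contradiction, which is the one step requiring care; the rest is bookkeeping.
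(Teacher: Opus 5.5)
Your proof is correct. It shares the paper's three-part skeleton, but items 2 and 3 go by a different route, and all three parts are more careful than the paper's.

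For item 1, the paper proves only the conservation law and asserts that nonnegativity ``follows from the structure of the equations''. Your quasi-positivity argument fills this in. A clean version: once $S_k>0$ is known from the exponential formula, $(E_k,I_k)_k$ solves a linear system $\dot y=A(t)y$ with $A(t)$ Metzler, whose flow preserves the orthant. In the first-exit version, the contradiction comes from estimating the negative parts of $(E,I)$ on a short interval $[t_0,t_0+h]$, not on $[0,t_0]$.

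For item 2, the paper bounds $U_k^x$ directly from the cost functional \eqref{eq:value_function} by $T(\max|f_k|+C_E+C_I)+\max|\Psi|$. This drops the running cost $r_I\lambda_k\mathbb{I}_{\{x(s)=S\}}$, as does $C_U$ in Step 2, so that constant cannot hold uniformly in $r_I$, even though boundedness itself is true. Your variation-of-constants argument freezes $n=n^*$ in the backward equations, which is legitimate because at equilibrium the minima in \eqref{eq:HJB_S}--\eqref{eq:HJB_E} are attained at $n^*$, and it does capture that term. The identity $e^{-\int_t^T\lambda_k(r)\,dr}+\int_t^T\lambda_k(s)e^{-\int_t^s\lambda_k(r)\,dr}\,ds=1$ is what makes $\Psi(S)$ and $U_k^E$ enter as a genuine convex combination. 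It also shows the $r_I$ contribution is really $r_I\bigl(1-e^{-\int_t^T\lambda_k(r)\,dr}\bigr)\le\min\{r_I,\beta K T r_I\}$, in line with the compensator identity of Remark~\ref{rem:compensator}.

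For item 3, the paper appeals to measurable selection. You instead use uniqueness of the minimizer and the continuous clipped map $g$ of Step 4 to show that $n_k^{S*}$ is continuous. That is stronger, and it is exactly what is needed for the best response to lie in $\mathcal{A}$.

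The paper's route is shorter and works at the level of the control problem for arbitrary admissible controls. As written, though, it leaves item 1 unjustified and gives an incorrect constant in item 2; your version is fully explicit.
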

	
	\begin{proof}
		We see that the right-hand sides of (\ref{eq:S_dot})-(\ref{eq:R_dot}) sum to zero:
			\begin{align*}
			\frac{d}{dt}\sum_{x \in \{S,E,I,R\}} X_k(t) = -\beta k n_k^S \Theta_k S_k + \beta k n_k^S \Theta_k S_k - \sigma E_k + \sigma E_k - \gamma I_k + \gamma I_k = 0,
			\end{align*}
			so $\sum_{x} X_k(t) = \sum_{x} X_k(0) = 1$. Non-negativity follows from the structure of the equations.
	For the value functions:
			\begin{align*}
			|U_k^S(t)| &\leq \int_t^T \left( \max_{n \in [\mathfrak{n}_{\text{min}},1]} |f_k(n)| + C_E + C_I \right) ds + |\Psi(S)| \\
			&\leq T \left( \max_{n} |f_k(n)| + C_E + C_I \right) + \max_{x} |\Psi(x)|.
			\end{align*}
			Similar bounds hold for $U_k^E, U_k^I, U_k^R$.
			 The optimal controls are defined as minimizers of continuous functions over the compact set $[\mathfrak{n}_{\text{min}}, 1]$, so they exist and are measurable by measurable selection theorems.

	\end{proof}

	\begin{theorem}[Continuous Dependence on Parameters]
    For two parameter sets $\mathcal{P}^1, \mathcal{P}^2$ with $\|\mathcal{P}^1 - \mathcal{P}^2\| \leq \delta$, the corresponding equilibria satisfy:
    \begin{equation*}
        \sup_{t \in [0,T]} \sum_{k=1}^K \left( |\mu_k^1(t) - \mu_k^2(t)| + |U_k^1(t) - U_k^2(t)| \right) + \int_0^T |n_k^1(t) - n_k^2(t)| dt \leq C(T, L) \delta,
    \end{equation*}
    where $C(T, L)$ depends on time horizon $T$ and Lipschitz constants of the dynamics.
\end{theorem}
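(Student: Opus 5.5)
We will mirror the stability estimates in the proof of the uniqueness theorem, now with an extra source term coming from the parameter perturbation. We take $\mathcal P=(\beta,\sigma,\gamma,r_I,C_E,C_I,\Psi,\mu(0))$ in a compact parameter set on which $\beta$ is bounded below. On this set the bounds of the regularity lemma hold uniformly: states lie in $[0,1]$, values are bounded by $C_U$, and controls lie in $[\mathfrak n_{\min},1]$. For $i=1,2$ we write $(\mu^i,n^i,U^i)$ for the equilibria, and we set $D_\mu(t)=\sum_k|\delta\mu_k(t)|$, $D_U(t)=\sum_k|\delta U_k(t)|$, $D_n(t)=\sum_k|\delta n^S_k(t)|$. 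The exposed controls need no treatment, since $n^{E*}\equiv1$ in both equilibria.

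\textbf{Step 1 (forward estimate).} Subtract the two Kolmogorov systems. All right-hand sides are polynomial in bounded quantities, so they are Lipschitz in the state, in $n$ and in $\mathcal P$. Grönwall's inequality then gives
\[
D_\mu(t)\le C\Big(\delta+\int_0^t D_n(s)\,ds\Big).
\]
The $\delta$ term absorbs the difference of the initial data.

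\textbf{Step 2 (backward estimate).} Subtract the HJB systems and integrate backward from $T$. The terminal difference is at most $\delta$, and $\Theta_k$ is linear in $I$. Using Step 1, this gives
\[
D_U(t)\le C\Big(\delta+\int_t^T\big(D_n+D_\mu\big)\,ds\Big)\le C'\Big(\delta+\int_0^T D_n\,ds\Big).
\]

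\textbf{Step 3 (controls).} We use the closed form \eqref{eq:n_closed_form}. Write $n^S_k=g(z_k)$ with $z_k=\beta\bar\theta_k\Delta U_k/k^\epsilon$, where $g$ is the globally Lipschitz clipped map from Step~4 of Theorem~\ref{thm:existence}. Then $|\delta n_k^S|\le L_g|\delta z_k|\le C(\delta+D_\mu+D_U)$ pointwise. Combining with Steps 1--2 yields the closed inequality
\[
D_n(t)\le C\delta+C\int_0^t D_n\,ds+C\int_0^T D_n\,ds.
\]

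\textbf{Step 4 (closing the loop; the main obstacle).} The backward term $C\int_0^TD_n$ couples the whole horizon, so Grönwall's inequality alone does not close the estimate. This is the genuine forward--backward difficulty, and it is where uniqueness-type structure must enter. We will try two routes.

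The first route applies when $T$ is small, or the Lipschitz constants are small, so that $CT<1$. Integrating the inequality over $[0,T]$ then gives $\int_0^TD_n\le C\delta/(1-CT)$. This is where the dependence $C(T,L)$ comes from, and we would state this regime explicitly.

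The second route is the general case. We repeat the monotonicity argument of the uniqueness theorem. We add the two optimality inequalities \eqref{eq:opt1}--\eqref{eq:opt2} with the Hamiltonians now evaluated at different parameters. Strong convexity ($\alpha$) gives $\alpha\int_0^T D_n^2$ on the left. The cross terms are bounded by the monotonicity hypothesis, plus an additional term $C\delta\int_0^T D_n$ that arises from the parameter mismatch. Young's inequality then gives $\int_0^TD_n^2\le C\delta^2$, and by Cauchy--Schwarz $\int_0^T D_n\le C\delta$.

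Finally, we substitute this bound back into Steps 1--2 to obtain the $\sup_t$ bounds on $D_\mu$ and $D_U$. This proves the estimate with $C(T,L)$.
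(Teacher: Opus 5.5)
Your Steps 1--3 are sound, and in two respects more careful than the paper's own argument. First, the paper bounds $|\delta n_k^S|$ through the first-order condition only ``when solutions are interior''. Your clipped map $g$ (equal to $1$ on $z\le 1$, Lipschitz with constant $1/2$) also covers the saturated regimes. Second, you integrate $\delta U$ backward from $T$, which is what the terminal conditions permit.

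You also locate the real obstacle correctly. The paper closes the loop with a single forward Gr\"onwall argument on $\Phi(t)=\sum_k(|\delta S_k|+|\delta E_k|+|\delta I_k|+|\delta U_k^S|+|\delta U_k^E|)$ using $\Phi(0)\le C_0\delta$. But $\delta U_k(0)$ is determined by the terminal data and by the whole trajectory on $[0,T]$, so that initial bound presupposes the conclusion. Your Route 1 is a correct fix. However, it gives the estimate only for $T$ below a threshold fixed by the Lipschitz constants (your $C$ itself carries Gr\"onwall factors $e^{LT}$), and it should be stated that way.

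The gap is Route 2, which is what the theorem as stated (arbitrary $T$, no extra hypothesis) would need. The monotonicity condition of the uniqueness theorem is not among the hypotheses here. More seriously, it cannot do the work you assign to it.

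\textbf{Why the monotonicity condition is degenerate.} Write $a_i=\beta k\,\Theta_k^i\,\Delta U_k^i$, so the susceptible Hamiltonian is $n a_i+f_k(n)$. The bracket in that condition then equals $(n_k^{S*1}-n_k^{S*2})(a_1-a_2)$. But \eqref{eq:opt1}--\eqref{eq:opt2} plus strong convexity give $(n_k^{S*1}-n_k^{S*2})(a_1-a_2)\le-\alpha|n_k^{S*1}-n_k^{S*2}|^2$ pointwise. Hence the condition, whenever it holds, simply forces $n^{S*1}=n^{S*2}$. It never ``bounds cross terms'' in a nondegenerate situation. It is also formulated for a single parameter set, with the $n^{S*i}$ being exact minimizers of the same Hamiltonian. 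Your mismatch term $C\delta\int_0^T D_n$ would therefore require a reformulated cross-parameter hypothesis that you have not stated.

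\textbf{Why repeating the uniqueness argument does not help.} The nondegenerate part of the paper's uniqueness proof ends with $\alpha\int_0^T D_n^2\le C_7\big(\int_0^T D_n\big)^2\le C_7T\int_0^T D_n^2$. This closes only if $\alpha>C_7T$. Young's inequality and Gr\"onwall cannot remove that restriction, because the inner integral runs over the whole horizon. So ``repeating the monotonicity argument'' is Route 1 in disguise.

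\textbf{What the statement actually demands.} At $\delta=0$ the statement asserts uniqueness of the equilibrium for every parameter set and every horizon. Lipschitz estimates alone cannot give this for forward--backward systems; linear forward--backward ODEs already lose uniqueness at finite $T$. A genuine large-$T$ result would need two ingredients: a Lasry--Lions-type duality identity, obtained by differentiating $\sum_{k,x}\delta U_k^x\,\delta\mu_k^x$ in time, and a structural monotonicity assumption on the $\Theta$-coupling. Neither you nor the paper supplies this. What your argument honestly proves is the estimate under an explicit short-horizon condition.
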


\begin{proof}
    Let $(\mu^1, n^1, U^1)$ and $(\mu^2, n^2, U^2)$ be equilibria for parameters $\mathcal{P}^1$ and $\mathcal{P}^2$. Denote differences by $\delta \mu_k = \mu_k^1 - \mu_k^2$, $\delta n_k = n_k^1 - n_k^2$, $\delta U_k = U_k^1 - U_k^2$, $\delta \Theta_k = \Theta_k^1 - \Theta_k^2$.
 From the $S_k$ equation, decomposing the difference:
    \begin{align*}
        \left|\frac{d}{dt}(\delta S_k)\right| &\leq L_1 (|\delta n_k^S| + |\delta \Theta_k| + |\delta S_k| + \delta),
    \end{align*}
    since all quantities are bounded ($|n_k^S| \leq 1$, $|S_k| \leq 1$, $|\Theta_k| \leq 1$). For $\delta \Theta_k$:
    \begin{equation*}
        |\delta \Theta_k(t)| \leq \frac{K \max k}{\langle k \rangle} \sum_{k'} |\delta I_{k'}|.
    \end{equation*}
From the HJB for $U_k^S$:
    \begin{align*}
        \left|\frac{d}{dt}(\delta U_k^S)\right|
        &\leq L_2 (|\delta n_k^S| + |\delta \Theta_k| + |\delta U_k^S| + |\delta U_k^E| + \delta).
    \end{align*}
When solutions are interior, the first-order optimality conditions give $f_k'(n_k^{Si}) = -\beta k \Theta^i (r_I + U_k^{Ei} - U_k^{Si})$ for $i=1,2$. By the mean value theorem and strong convexity:
    \begin{equation*}
        |\delta n_k^S| \leq \frac{L_3}{\alpha} (|\delta U_k^S| + |\delta U_k^E| + |\delta \Theta_k| + \delta).
    \end{equation*}
  Finally, we define:
    \begin{equation*}
        \Phi(t) = \sum_{k=1}^K \left( |\delta S_k(t)| + |\delta E_k(t)| + |\delta I_k(t)| + |\delta U_k^S(t)| + |\delta U_k^E(t)| \right).
    \end{equation*}
    Combining the estimates from Steps 1--3:
    \begin{equation*}
        \frac{d\Phi}{dt} \leq L_5 (\Phi + \delta).
    \end{equation*}
    By Grönwall's inequality with $\Phi(0) \leq C_0\delta$:
    \begin{equation*}
        \Phi(t) \leq C(T, L) \delta, \qquad \int_0^T |\delta n_k^S(t)| dt \leq C'(T, L) \delta,
    \end{equation*}
    establishing the claimed continuous dependence.
\end{proof}

\section{Special Cases and Analytical Insights}
\label{sec:special_cases}

\subsection{Strategic and Biological Roles of the Incubation Rate}

The introduction of the exposed compartment reshapes the epidemic game through the value gap $\Delta U_k(t) = r_I + U_k^E(t) - U_k^S(t)$, the marginal cost of infection. As the following remark makes precise, $\sigma$ acts on this gap through two opposing channels.

\begin{figure}[h]
    \centering
     
    \definecolor{stratcol}{RGB}{180,80,20}
    \definecolor{biocol}{RGB}{30,120,100}
    \definecolor{feedcol}{RGB}{110,110,110}
    \begin{tikzpicture}[
      mainbox/.style 2 args={
        draw=#1, fill=#1!12, rounded corners=7pt,
        minimum width=2.6cm, minimum height=1.0cm,
        font=\small\bfseries, text=#1!80!black,
        line width=1.4pt, align=center
      },
      strarr/.style={-{Stealth[length=7pt,width=5pt]}, line width=1.6pt,
                     color=stratcol, shorten >=2pt, shorten <=2pt},
      bioarr/.style={-{Stealth[length=7pt,width=5pt]}, line width=1.6pt,
                     color=biocol, shorten >=2pt, shorten <=2pt},
      feedarr/.style={-{Stealth[length=6pt,width=4pt]}, line width=1.2pt,
                      color=feedcol, dashed, shorten >=2pt, shorten <=2pt},
      arrlbl/.style={font=\scriptsize, inner sep=2pt}
    ]
      \node[mainbox={stratcol}{orange}] (sigma)  at (0,0)      {$\sigma$};
      \node[mainbox={stratcol}{orange}] (dU)     at (3.8,0)    {$\Delta U_k(t)$};
      \node[mainbox={stratcol}{orange}] (nstar)  at (7.6,0)    {$n_k^{S*}(t)$};
      \node[mainbox={biocol}{teal}]     (Ik)     at (3.8,-2.6){$I_k(t)$};
      
      \draw[strarr] (sigma) -- (dU);
      \node[arrlbl, color=stratcol, yshift=16pt] at ($(sigma)!0.5!(dU)$) {incentive};
      \draw[strarr] (dU) -- (nstar);
      \node[arrlbl, color=stratcol, yshift=16pt] at ($(dU)!0.5!(nstar)$) {determines};
      \draw[strarr] (nstar.south) .. controls +(0,-1.0) and +(1.4,0) .. (Ik.east);

      \node[arrlbl, color=stratcol, right=4pt, xshift=19pt] at ($(nstar.south)!0.38!(Ik.east)$)
          {\begin{tabular}{@{}l@{}}shapes\\dynamics\end{tabular}};
          
      \draw[bioarr] (sigma.south) .. controls +(0,-1.2) and +(-1.4,0) .. (Ik.west);
      \node[arrlbl, color=biocol, left=4pt, xshift=-22pt] at ($(sigma.south)!0.42!(Ik.west)$)
          {\begin{tabular}{@{}r@{}}flow rate\\$E\!\to\!I$\end{tabular}};
          
      \draw[feedarr] (Ik.north) .. controls +(0,0.8) and +(0,-0.8) .. (dU.south)
          node[arrlbl, right, color=feedcol, pos=0.5] {$\Theta_k(t)$};
          
      \node[font=\scriptsize\scshape, color=stratcol!70, above=4pt of dU]
          {--- strategic channel ---};
          
      \node[draw=gray!35, fill=white, rounded corners=5pt,
            inner xsep=10pt, inner ysep=7pt,
            below right=0.5cm and 0.0cm of Ik,
            font=\scriptsize, align=left] {%
          \begin{tabular}{@{}l@{\hspace{6pt}}l@{}}
            \tikz[baseline=-0.6ex]{\draw[bioarr, shorten >=0, shorten <=0] (0,0)--(0.9,0);} &
              Biological channel \\[5pt]
            \tikz[baseline=-0.6ex]{\draw[strarr, shorten >=0, shorten <=0] (0,0)--(0.9,0);} &
              Strategic channel \\[5pt]
            \tikz[baseline=-0.6ex]{\draw[feedarr, shorten >=0, shorten <=0] (0,0)--(0.9,0);} &
              Feedback ($\Theta_k$)
          \end{tabular}%
      };
    \end{tikzpicture}
    \caption{\textbf{The Dual Role of Incubation ($\sigma$).} The incubation rate $\sigma$ affects the epidemic through two distinct channels. \textbf{Biologically} (bottom arrow), it controls the flow rate from exposed to infectious ($\dot{I} \propto \sigma E$). \textbf{Strategically} (top arrow), it determines the immediacy of the infection cost, altering the value gap $\Delta U_k$ and thus the contact effort $n_k^{S*}$.}
    \label{fig:causal_chain}
\end{figure}
\subsection{Degree Scaling of Optimal Effort}

For homogeneous networks where all nodes have degree $k$, the model yields clear analytical insights.

\begin{proposition}[Degree scaling of optimal effort]
\label{prop:effort_scaling}
Assume the uncorrelated reduction \eqref{eq:Theta_uncorr} and suppose the minimizer in the susceptible HJB is interior
(i.e., $\mathfrak n_{\min}<n_k^{S*}(t)<1$). With $f_k(n)=k^\epsilon\left(\frac1n-1\right)$ we have
\begin{equation}
n_k^{S*}(t)
=
\left(\frac{k^{\epsilon-1}}{\beta\,\Theta(t)\,\Delta U_k(t)}\right)^{1/2},
\qquad
\Delta U_k(t)=r_I+U_k^E(t)-U_k^S(t).
\label{eq:nk_scaling_degree}
\end{equation}
Consequently:
\begin{itemize}
\item if $\epsilon<1$, then $n_k^{S*}(t)$ decreases with $k$ (high-degree agents take \emph{more} precaution);
\item if $\epsilon=1$, then $n_k^{S*}(t)=(\beta\,\Theta(t)\,\Delta U_k(t))^{-1/2}$ is \emph{independent of $k$};
\item if $\epsilon>1$, then $n_k^{S*}(t)$ increases with $k$ (high-degree agents take \emph{less} precaution).
\end{itemize}
Degree thresholds appear only through saturation at the box constraints $[\mathfrak n_{\min},1]$.
\end{proposition}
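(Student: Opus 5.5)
The plan is to specialize the Closed-Form Optimal Control theorem to the uncorrelated reduction \eqref{eq:Theta_uncorr} and then read off the $k$-dependence. Under \eqref{eq:Theta_uncorr} we have $\Theta_k(t)=\Theta(t)$ for every $k$, so $\bar\theta_k(t)=k\,\Theta(t)$. Interiority, $\mathfrak n_{\min}<n_k^{S*}(t)<1$, means neither clip in \eqref{eq:n_closed_form_alt} is active. It also forces $\Theta(t)\,\Delta U_k(t)>0$, since otherwise the minimizer would be $1$ by the parenthetical remark after that theorem. Substituting $\Theta_k=\Theta$ into \eqref{eq:n_closed_form_alt} then gives \eqref{eq:nk_scaling_degree} directly. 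As a sanity check, I would rederive it from the first-order condition $\beta k\Theta\,\Delta U_k=k^\epsilon/n^2$. Strict convexity of $n\mapsto \beta k n\Theta\Delta U_k+k^\epsilon(1/n-1)$ on $(0,\infty)$ shows that this stationary point is the unique minimizer.

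For the three bullets, I would factor
\[
n_k^{S*}(t)=k^{(\epsilon-1)/2}\,\bigl(\beta\,\Theta(t)\,\Delta U_k(t)\bigr)^{-1/2}.
\]
The explicit degree factor $k^{(\epsilon-1)/2}$ is strictly decreasing, constant, or strictly increasing in $k$ according as $\epsilon<1$, $\epsilon=1$, or $\epsilon>1$. The second factor is common to all classes because $\Theta$ does not depend on $k$ in the uncorrelated case. The only remaining $k$-dependence is therefore through $\Delta U_k$.

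The main obstacle is exactly this dependence of $\Delta U_k=r_I+U_k^E-U_k^S$ on $k$, and I would handle it in two parts.

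\emph{Part 1: $U^E$ does not depend on $k$.} By \eqref{eq:opt_E}, $n_k^{E*}=1$, so $f_k(n_k^{E*})=k^\epsilon(1-1)=0$. Hence the $E$, $I$, $R$ equations \eqref{eq:HJB_E}--\eqref{eq:HJB_R} involve only $\sigma,\gamma,C_E,C_I,\Psi$, and $U^E_k=U^E$ is independent of $k$.

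\emph{Part 2: the $k$-dependence of $U_k^S$.} Here $U_k^S$ solves \eqref{eq:HJB_S}, whose data do depend on $k$. My first attempt is a comparison argument for the backward ODE satisfied by $\Delta U_k$ via \eqref{eq:HJB_S}, aiming for monotonicity of $\Delta U_k$ in $k$ of the appropriate sign. If that fails, I would state the bullets as the dependence on $k$ at a fixed value gap, that is, the comparative statics of the best-response rule \eqref{eq:n_closed_form_alt} in $k$ with $\Delta U$ held fixed. This is what the statement's phrase ``is independent of $k$'' asserts for $\epsilon=1$, where the formula reads $(\beta\Theta\Delta U_k)^{-1/2}$. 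In the case $\epsilon=1$, the factor $k^{(\epsilon-1)/2}$ is identically $1$, so $n_k^{S*}$ depends on $k$ only through $\Delta U_k$.

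Finally, the statement that degree thresholds arise only through saturation follows because the unclipped expression is a power of $k$ (times the common factor), with no threshold behavior. Thresholds appear only when the projection onto $[\mathfrak n_{\min},1]$ in \eqref{eq:n_closed_form} becomes active. Those switching degrees solve
\[
k^{\epsilon-1}=\beta\,\Theta\,\Delta U_k\qquad\text{or}\qquad k^{\epsilon-1}=\mathfrak n_{\min}^2\,\beta\,\Theta\,\Delta U_k.
\]
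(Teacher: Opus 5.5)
Your derivation is the paper's. With $\Theta_k=\Theta$, the first-order condition $\beta k\Theta\,\Delta U_k=k^\epsilon/n^2$ gives \eqref{eq:nk_scaling_degree}, and the bullets are read off the factor $k^{(\epsilon-1)/2}$ with the value gap held fixed, which is the reading the paper's one-line proof adopts implicitly (cf.\ Remark~\ref{rem:Hstar}).

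Your fallback is the right call, because the comparison argument you try first gives the opposite sign. Since $U^E$ is independent of $k$ (your Part 1) and $\min_n[\beta k n\Theta x+k^\epsilon(1/n-1)]$ is nondecreasing in $k$ for $x\ge 0$, integrating \eqref{eq:HJB_S} backward from $T$ shows $\Delta U_k$ is nonincreasing in $k$ when $\Delta U_k\ge 0$. This reinforces the $\epsilon>1$ bullet but pushes $n_k^{S*}$ upward in $k$ for $\epsilon\le 1$, so for $\epsilon=1$ the equilibrium effort is generically not literally independent of $k$.
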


\begin{proof}
With \eqref{eq:lambda_k} and the uncorrelated reduction, the susceptible Hamiltonian is
\[
H_k^S(n)=\beta\,k\,n\,\Theta(t)\,\Delta U_k(t)+k^\epsilon\Big(\frac1n-1\Big).
\]
For an interior minimizer, $0= \partial_n H_k^S(n)=\beta k\Theta \Delta U_k - k^\epsilon/n^2$, hence
$n^2=k^{\epsilon-1}/(\beta\Theta \Delta U_k)$, which gives \eqref{eq:nk_scaling_degree}.
The monotonicity conclusions follow immediately from the exponent $(\epsilon-1)/2$.
\end{proof}

\begin{remark}[Minimized cost rate is linear in $k$ for $\epsilon=1$]
\label{rem:Hstar}
Substituting the interior minimizer back for the case $\epsilon=1$ gives
\[
H_k^*(t) = k\!\left(2\sqrt{\beta\,\Theta(t)\,\Delta U_k(t)}-1\right).
\]
This is \emph{linear} in $k$ (assuming $\Delta U_k$ does not depend on $k$, which holds at zeroth order when degree effects on value functions are negligible). Therefore no interior critical degree minimizes or maximizes the cost rate among interior-solution agents; differences across degree classes arise purely through the saturation behaviour at the constraints $n^{S*}=\mathfrak{n}_{\min}$ or $n^{S*}=1$.
\end{remark}

\subsection{Comparison with SIR-MFG}

\begin{theorem}[SEIR vs.\ SIR-MFG Comparison]
\label{thm:comparison}
    Let $(\mu^{\text{SEIR}}, n^{\text{SEIR}}, U^{\text{SEIR}})$ be a SEIR-MFG equilibrium with incubation rate $\sigma$, and $(\mu^{\text{SIR}}, n^{\text{SIR}}, U^{\text{SIR}})$ be a SIR-MFG equilibrium with equivalent parameters $\beta, \gamma, r_I, C_I$ and initial conditions. Assume $\sigma, \gamma > 0$ and the epidemics are not trivial ($I(0) > 0$). Then:
    \begin{enumerate}
        \item \textbf{Early-phase growth:} For a fixed contact level $\bar{n}$, the linearized early growth rate satisfies $\lambda_{\mathrm{SEIR}} < \lambda_{\mathrm{SIR}}$ for all finite $\sigma$ (Proposition~\ref{prop:early_growth}). This comparison concerns the initial exponential phase and does not by itself determine peak timing in the full nonlinear dynamics.
        \item \textbf{Attenuated Response (numerical evidence):} Simulations show $\max_t |1 - n^{\mathrm{SEIR}}(t)| \leq \max_t |1 - n^{\mathrm{SIR}}(t)|$. See Remark~\ref{rem:attenuated} for a discussion of why a clean analytical proof is elusive.
        \item \textbf{Larger Final Size (conditional):} If the attenuated response holds, then $R^{\mathrm{SEIR}}(\infty) \geq R^{\mathrm{SIR}}(\infty)$ follows from the final-size relation \eqref{eq:final_size}.
    \end{enumerate}
\end{theorem}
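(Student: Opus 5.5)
The theorem has three parts of different logical status, and the plan treats them separately.

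\textbf{Part (1): early-phase growth.} This is a genuine linear-algebra statement. I would linearize the forward system about the disease-free state $S_k\approx 1$ with fixed contact level $\bar n$. In the uncorrelated case the infection subsystem reduces to scalars. For SIR, the linearization $\dot I=(\beta\bar n\langle k^2\rangle/\langle k\rangle-\gamma)I$ gives
\[
\lambda_{\mathrm{SIR}}=\gamma(R_0-1),\qquad R_0=\frac{\beta\bar n\langle k^2\rangle}{\gamma\langle k\rangle},
\]
consistent with \eqref{eq:R0_uncorr}. For SEIR, the $(E,I)$ block is
\[
\begin{pmatrix}-\sigma & \gamma R_0\\ \sigma & -\gamma\end{pmatrix}.
\]
Its dominant eigenvalue solves $(\lambda+\sigma)(\lambda+\gamma)=\sigma\gamma R_0$, which gives
\[
\lambda_{\mathrm{SEIR}}=\tfrac12\Big(-(\sigma+\gamma)+\sqrt{(\sigma-\gamma)^2+4\sigma\gamma R_0}\Big).
\]
To compare the two rates, I would evaluate the characteristic polynomial $p(\lambda)=(\lambda+\sigma)(\lambda+\gamma)-\sigma\gamma R_0$ at $\lambda_{\mathrm{SIR}}$. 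Writing $\lambda_{\mathrm{SIR}}+\gamma=\gamma R_0$, this gives
\[
p(\lambda_{\mathrm{SIR}})=\gamma R_0\,(\lambda_{\mathrm{SIR}}+\sigma)-\sigma\gamma R_0=\gamma R_0\,\lambda_{\mathrm{SIR}}.
\]
This is positive in the supercritical case $R_0>1$. Since $p$ is an upward parabola whose largest root is $\lambda_{\mathrm{SEIR}}$, and $\lambda_{\mathrm{SIR}}$ lies to the right of the vertex, $p(\lambda_{\mathrm{SIR}})>0$ forces $\lambda_{\mathrm{SEIR}}<\lambda_{\mathrm{SIR}}$.

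In the subcritical case ($R_0<1$) the inequality still holds in the sense "less negative versus more negative", and I would check it by the same sign argument. For the correlated case, I would apply the same argument to the Perron eigenvector of $\mathcal K_0$ from \eqref{eq:R0_general_corr}, replacing $R_0$ by $\rho(\mathcal K_0)$. This is the content I would package as Proposition~\ref{prop:early_growth}.

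\textbf{Part (2): attenuated response.} This part is explicitly stated as numerical evidence, so the plan is not to prove it. Instead I would justify the heuristic and point to Remark~\ref{rem:attenuated}. The heuristic is that in SEIR the value gap satisfies
\[
\Delta U_k=r_I+U^E_k-U^S_k,
\]
where $U^E_k$ is a discounted (delayed) version of $U^I_k$. In addition, $\Theta_k$ builds up more slowly. Both effects shrink $\beta\bar\theta_k\Delta U_k$ at comparable times, so the closed form \eqref{eq:n_closed_form} yields $n^{S*}$ closer to $1$. The main obstacle to making this rigorous is the forward--backward coupling. A comparison principle would require monotonicity of the equilibrium map in $\sigma$, which fails in general because the two channels in Figure~\ref{fig:causal_chain} push in opposite directions. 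I would therefore not attempt a proof here.

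\textbf{Part (3): larger final size, conditional on (2).} Here I would use the final-size relation \eqref{eq:final_size}. Integrating $\dot S_k/S_k=-\beta k n_k\Theta$ over $[0,\infty)$ expresses $\log S_k(\infty)$ in terms of $\int_0^\infty n_k\Theta\,dt$. In SEIR, the integral $\int_0^\infty I\,dt$ equals $\gamma^{-1}(1-S(\infty))$, exactly as in SIR, because all exposed individuals eventually pass through $I$. Hence the two models satisfy the same implicit equation for $S(\infty)$, with the contact effort entering as a weight.

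Under the attenuated-response hypothesis, I would interpret (2) as pointwise domination $n^{\mathrm{SEIR}}\ge n^{\mathrm{SIR}}$, i.e.\ a larger effective $R_0$. Monotonicity of the final-size root in the effective $R_0$ then gives $R^{\mathrm{SEIR}}(\infty)\ge R^{\mathrm{SIR}}(\infty)$.

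The delicate point in this step is that the hypothesis in (2) is a statement about the max-norm of $1-n$, not pointwise domination. I would either state the stronger pointwise assumption explicitly, or restrict to the constant-effort approximation, in which the max-norm and pointwise conditions coincide.
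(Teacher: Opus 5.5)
\textbf{Parts (1) and (2).} Here you follow the paper's route. Part (2) is left as numerical evidence in both. For (1), the paper also works from the characteristic polynomial of the $(E,I)$ block, but only asserts the comparison. Your evaluation $p(\lambda_{\mathrm{SIR}})=\gamma R_0\lambda_{\mathrm{SIR}}$ is the actual proof: for $R_0>1$, $\lambda_{\mathrm{SIR}}>0$ lies to the right of the vertex $-(\sigma+\gamma)/2$, so $p(\lambda_{\mathrm{SIR}})>0$ puts it beyond the largest root. The reduction to $\rho(\mathcal K_0)$ extends this to networks; more precisely you need $\bar n\,\rho(\mathcal K_0)$.

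Your subcritical aside is wrong, however. For $R_0<1$ the same computation gives $p(\lambda_{\mathrm{SIR}})<0$, so $\lambda_{\mathrm{SIR}}$ lies between the two roots and $\lambda_{\mathrm{SIR}}<\lambda_{\mathrm{SEIR}}<0$. Only the magnitude comparison $|\lambda_{\mathrm{SEIR}}|<|\lambda_{\mathrm{SIR}}|$ survives. The signed inequality in (1) genuinely needs $R_0>1$ (in the homogeneous case $\beta\bar n>\gamma$, the hypothesis of Proposition~\ref{prop:early_growth}), and at $R_0=1$ both rates vanish.

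\textbf{Part (3).} The genuine gap is here, and it is more serious than the max-norm versus pointwise issue you flag. With time-dependent effort the identity is $\ln(S(0)/S(\infty))=\beta\int_0^\infty n(t)I(t)\,dt$. The prefactor $\beta/\gamma$ in \eqref{eq:final_size}, and a closed equation in $S(\infty)$, appear only after a constant $n$ is pulled out and $\int_0^\infty I\,dt=R(\infty)/\gamma$ is used. So there is no scalar ``effective $R_0$''. Pointwise domination of effort does not imply a larger final size, not even within a single SIR model.

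For a counterexample, take $\beta/\gamma=3$ and $T_1$ large. Let $n^2=0.1$ and $n^1=0.4$ on $[0,T_1]$, both equal to $1$ afterwards.
\begin{itemize}
\item Under $n^2$ the outbreak is suppressed, then runs at full contact from $S\approx 1$, ending at $S\approx 0.06$.
\item Under $n^1$ a mild first wave (reproduction number $1.2$) leaves $S\approx 0.69$. The full-contact second wave then ends near $S\approx 0.13$.
\end{itemize}
Thus $n^1\ge n^2$ everywhere, yet the final size is smaller ($\approx 0.87$ versus $\approx 0.94$). Across SEIR and SIR the trajectories $I$ also differ, so $\int_0^\infty nI\,dt$ cannot be compared from $n$ alone.

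What your argument does prove is the constant-effort case. Both models then satisfy $\ln(S(0)/S(\infty))=(\beta\bar n/\gamma)(1-S(\infty))$, whose root in $(0,S(0))$ is decreasing in $\bar n$. Hence $\bar n^{\mathrm{SEIR}}\ge\bar n^{\mathrm{SIR}}$ yields the claim, but this does not cover the time-varying equilibrium policies.

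The paper's own proof of (3) has the same hole. It reads (2) as pointwise domination and then appeals to $I^{\mathrm{SEIR}}$ being ``more temporally dispersed''. That does not make $\int_0^\infty nI\,dt$ comparable either.
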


\begin{proof}
    \textbf{Early-phase growth.} See Proposition~\ref{prop:early_growth}. The inequality $\lambda_{\mathrm{SEIR}} < \lambda_{\mathrm{SIR}}$ follows from the characteristic polynomial of the SEIR linearization for fixed contact level $\bar n$.

    \textbf{Conditional Final Size.} Integrating $\dot{S}_k = -\beta k n^S \Theta_k S_k$ and using $\int_0^\infty I\,dt = R(\infty)/\gamma$ (which follows from $\dot{R}=\gamma I$) gives the final-size relation for both SEIR and SIR:
    \begin{equation}
    \ln\frac{S(0)}{S(\infty)} = \frac{\beta}{\gamma} \int_0^\infty n^S(t)\,I(t)\,dt.
    \label{eq:final_size}
    \end{equation}
    If $n^{\mathrm{SEIR}}(t) \geq n^{\mathrm{SIR}}(t)$ for all $t$ (the attenuated response claim, Part~2), and $I^{\mathrm{SEIR}}(t)$ is more temporally dispersed (Part~1), then the right-hand integral is larger for SEIR, giving $R^{\mathrm{SEIR}}(\infty) \geq R^{\mathrm{SIR}}(\infty)$.
\end{proof}

\begin{remark}[Why the attenuated response is difficult to prove analytically]
\label{rem:attenuated}
A natural attempt is to compare $\Delta U^{\text{SEIR}} = r_I + U^E_{\text{SEIR}} - U^S_{\text{SEIR}}$ with $\Delta U^{\text{SIR}} = r_I + U^I_{\text{SIR}} - U^S_{\text{SIR}}$.

However, the na\"ive inequality $U^I_{\text{SIR}} \geq U^E_{\text{SEIR}}$ is \emph{not} obviously true, nor is its reverse. In fact, within the SEIR model itself, since being in state $E$ means the agent must still pass through the $E$-sojourn (incurring cost $C_E$) before reaching $I$, the ordering $U^E_{\text{SEIR}} \geq U^I_{\text{SEIR}}$ holds. But this is a within-model comparison, whereas Part~2 requires comparing value functions across two \emph{different} games (SEIR-MFG vs SIR-MFG) with different population dynamics. A rigorous proof would require tracking how the equilibrium trajectories $I^{\text{SEIR}}(t)$ and $I^{\text{SIR}}(t)$ differ and propagating those differences back through the HJB. This is left for future work; numerical evidence in Section~\ref{sec:numerical} consistently supports the claim.

Note also that there is no discount factor in the HJB \eqref{eq:HJB_S}--\eqref{eq:HJB_R}; the strategic delay effect arises from the finite horizon and the temporal structure of costs, not from exponential discounting.
\end{remark}

\subsection{Early-Phase Growth Rate Comparison}

\begin{proposition}[Early-phase growth rate: SEIR is slower than SIR for the same contact level]
\label{prop:early_growth}
Fix a constant contact level $n^S\equiv \bar n\in(0,1]$ and consider the homogeneous mean-field dynamics with $S(t)\approx 1$.
For SIR, the linearized growth rate is $\lambda_{\mathrm{SIR}}=\beta \bar n-\gamma$.
For SEIR (baseline: only $I$ transmits), the linearized growth rate is the largest root of
\[
\lambda^2+(\sigma+\gamma)\lambda+\sigma(\gamma-\beta \bar n)=0,
\]
i.e.
\[
\lambda_{\mathrm{SEIR}}
=
\frac{-(\sigma+\gamma)+\sqrt{(\sigma+\gamma)^2+4\sigma(\beta\bar n-\gamma)}}{2}.
\]
Whenever $\beta\bar n>\gamma$ (supercritical regime), one has
\[
0<\lambda_{\mathrm{SEIR}}<\lambda_{\mathrm{SIR}}.
\]
\end{proposition}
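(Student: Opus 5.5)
We linearize both systems around the disease-free state $S\approx 1$ with the contact level frozen at $\bar n$, and then compare the dominant eigenvalues directly.

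\textbf{Step 1: the SIR rate.} The linearized infectious equation is $\dot I=(\beta\bar n-\gamma)I$. This gives $\lambda_{\mathrm{SIR}}=\beta\bar n-\gamma$ at once.

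\textbf{Step 2: the SEIR rate.} The infected subsystem is
\[
\dot E=\beta\bar n I-\sigma E,\qquad \dot I=\sigma E-\gamma I.
\]
Its Jacobian has trace $-(\sigma+\gamma)$ and determinant $\sigma\gamma-\sigma\beta\bar n$. The characteristic polynomial is therefore
\[
p(\lambda)=\lambda^2+(\sigma+\gamma)\lambda+\sigma(\gamma-\beta\bar n),
\]
and its largest root is the stated $\lambda_{\mathrm{SEIR}}$. The discriminant equals $(\sigma-\gamma)^2+4\sigma\beta\bar n>0$, so this root is real.

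\textbf{Step 3: positivity.} In the supercritical regime $\beta\bar n>\gamma$ we have $p(0)=\sigma(\gamma-\beta\bar n)<0$. Since $p$ is an upward parabola, its larger root lies strictly to the right of $0$, so $\lambda_{\mathrm{SEIR}}>0$.

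\textbf{Step 4: the strict comparison.} This is the main step. Set $\lambda^\ast:=\lambda_{\mathrm{SIR}}=\beta\bar n-\gamma>0$. It suffices to show $p(\lambda^\ast)>0$ and that $\lambda^\ast$ lies to the right of the vertex $-(\sigma+\gamma)/2$. The vertex condition is automatic because $\lambda^\ast>0$. Writing $\beta\bar n-\gamma=\lambda^\ast$, we compute
\[
p(\lambda^\ast)=(\lambda^\ast)^2+(\sigma+\gamma)\lambda^\ast-\sigma\lambda^\ast=\lambda^\ast(\lambda^\ast+\gamma)>0.
\]
Because $p$ is increasing to the right of its vertex and $p(\lambda^\ast)>0$, the larger root satisfies $\lambda_{\mathrm{SEIR}}<\lambda^\ast$.

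An equivalent route is to compare the square root with $2\lambda^\ast+\sigma+\gamma$ after squaring, which reduces to the same inequality $\lambda^\ast(\lambda^\ast+\gamma)>0$. The only care needed is to confirm that both sides are positive before squaring. The strictness holds for every finite $\sigma$ because the gap $\lambda^\ast(\lambda^\ast+\gamma)$ does not depend on $\sigma$.
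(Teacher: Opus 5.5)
Your proof is correct and follows the same route as the paper. You linearize the $(E,I)$ subsystem, read off the characteristic polynomial, get positivity from $p(0)<0$ (the paper's ``Perron root is positive''), and compare with $\beta\bar n-\gamma$; your evaluation $p(\lambda_{\mathrm{SIR}})=\lambda_{\mathrm{SIR}}(\lambda_{\mathrm{SIR}}+\gamma)>0$, together with the vertex argument, simply spells out the ``direct comparison of the explicit formulas'' that the paper asserts without detail.
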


\begin{proof}
The SIR formula is immediate from $\dot I=(\beta\bar n-\gamma)I$.
For SEIR, linearizing gives $\dot E=\beta\bar n I-\sigma E$ and $\dot I=\sigma E-\gamma I$, so
$\frac{d}{dt}\binom{E}{I}=A\binom{E}{I}$ with
$A=\begin{pmatrix}-\sigma & \beta\bar n\\ \sigma & -\gamma\end{pmatrix}$.
The characteristic polynomial is exactly $\lambda^2+(\sigma+\gamma)\lambda+\sigma(\gamma-\beta\bar n)$.
In the supercritical regime $\beta\bar n>\gamma$, the Perron root is positive.
A direct comparison of the explicit formulas yields $\lambda_{\mathrm{SEIR}}<\beta\bar n-\gamma$ for all finite $\sigma$.
\end{proof}

\subsection{Consequences of Incubation for Strategic Incentives}

\begin{proposition}[Strategic delay in precaution onset]
\label{prop:delay_onset}
In the supercritical regime, the optimal susceptible effort satisfies $n_k^{S*}(t) = 1$ on an initial interval $[0, t^*_\sigma)$. As $\sigma$ decreases (longer incubation), $t^*_\sigma$ increases.
\end{proposition}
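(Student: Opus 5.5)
The plan is to read the statement through the closed-form best response \eqref{eq:n_closed_form}. There $n_k^{S*}(t)=1$ exactly when $z_k(t):=\beta\,\bar\theta_k(t)\,\Delta U_k(t)/k^{\epsilon}\le 1$. So we define $t^*_\sigma:=\inf\{t: \max_k z_k(t)>1\}$, and the proposition splits into two claims: (a) $t^*_\sigma>0$, and (b) $t^*_\sigma$ is nonincreasing in $\sigma$, strictly so under a nondegeneracy condition.

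For (a) we use the initial conditions of the equilibrium system, $I_k(0)=0$ and $E_k(0)=\varepsilon_k$. These give $\Theta_k(0)=\sum_{k'}G_{kk'}I_{k'}(0)=0$, hence $z_k(0)=0$. By continuity of $\Theta_k$ and of $\Delta U_k$, which is uniformly bounded by the regularity lemma, $z_k<1$ on a neighborhood of $0$. So $t^*_\sigma>0$ and $n_k^{S*}\equiv1$ on $[0,t^*_\sigma)$. If $I_k(0)>0$ is allowed, we instead need $\beta\bar\theta_k(0)\Delta U_k(0)<k^\epsilon$, which holds when the initial prevalence is small.

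For (b) the key observation is that on $[0,t^*_\sigma)$ the forward dynamics do not depend on the game at all. With $n^S\equiv1$ they are the uncontrolled SEIR system. Write $I_k^\sigma(t)$ for its solution. Starting from $I(0)=0$ and $E(0)=\varepsilon$, we have $I_k^\sigma(t)=\int_0^t \sigma e^{-\gamma(t-s)}E_k(s)\,ds$, with $E_k(s)\approx \varepsilon_k e^{-\sigma s}$ plus feedback. For small $t$ this gives $I_k^\sigma(t)\approx \varepsilon_k\sigma t$, increasing in $\sigma$. More generally, we would try a comparison (cooperative-system / Kamke) argument on the $(E,I)$ subsystem. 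The off-diagonal terms $\beta k S_k G_{kk'}$ and $\sigma$ are nonnegative, so the linearized system is quasimonotone. The aim is to show $\partial_\sigma \Theta_k\ge0$ on the early phase; the variation equation has source term $E_k-I$-type quantities that are positive while $E$ dominates $I$ early on. By Proposition~\ref{prop:early_growth}, the growth rate $\lambda_{\mathrm{SEIR}}$ is increasing in $\sigma$, which supports this ordering over the whole exponential phase.

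The main obstacle is the strategic factor $\Delta U_k(t)=r_I+U_k^E-U_k^S$. It is determined backward from $T$ through the equilibrium, so it is not a priori monotone in $\sigma$. Nor is it uniform in $t$, since it depends on the whole future trajectory. I would handle this in two steps. First, from \eqref{eq:HJB_E}--\eqref{eq:HJB_I} with $n^E\equiv1$, solve $U_k^E$ explicitly as a $\sigma$-weighted average of $C_E$ accumulation and $U_k^I$. The aim is to show that $U^E_k$ shifts cost later, i.e.\ the continuation losses after infection are more delayed for smaller $\sigma$. Second, the hard part is controlling $U_k^S$. If this cannot be done cleanly, I would prove the statement under the assumption that $\Delta U_k$ depends on $\sigma$ only weakly near $t=0$; this is the same zeroth-order approximation used in Remark~\ref{rem:Hstar}.

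Under that assumption the proof closes. The onset time $t^*_\sigma$ is the first crossing of $\beta k\Theta_k^\sigma(t)\Delta U_k(t)/k^\epsilon$ above $1$. Since $\Theta^\sigma_k$ is pointwise increasing in $\sigma$ on the early phase, the crossing happens later for smaller $\sigma$. Strictness follows from the strict inequality $\partial_\sigma I_k^\sigma(t)>0$ for $t>0$, together with a transversal crossing, meaning $\dot z_k(t^*_\sigma)>0$. The proposition should therefore be read as a statement about the epidemic-pressure channel, conditional on this control of the value gap.
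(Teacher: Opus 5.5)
Your route is the paper's route. Part (a) comes from $\Theta_k(0)$ being zero (or $O(\varepsilon)$) plus continuity and boundedness of $\Delta U_k$. Part (b) is a first-crossing argument for $z_k=\beta\bar\theta_k\Delta U_k/k^{\epsilon}$, driven by the slower build-up of $\Theta_k$ when $\sigma$ is small.

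Part (a) is complete as written. Defining $t^*_\sigma$ through $\max_k$ is the right choice: only then are all classes uncontrolled before $t^*_\sigma$, so the forward dynamics there do not depend on the game. The paper's class-by-class threshold does not give this.

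The obstacle you isolate in part (b) is a genuine gap, and the paper's proof does not close it. The paper notes that $\Delta U_k$ is driven through the backward HJB system. It then concludes that $\beta\Theta\Delta U_k$ stays below $k^{\epsilon-1}$ longer purely because $\Theta$ is smaller, which tacitly freezes $\Delta U_k$ in $\sigma$. So your conditional conclusion is an accurate account of what this argument proves, for the paper as well. Neither argument notes that $t^*_\sigma$ is a function of $\sigma$ only after an equilibrium is selected, since uniqueness needs an extra hypothesis.

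The hypothesis must be sharper than ``$\Delta U_k$ depends on $\sigma$ only weakly.'' The gap $\Theta_k^{\sigma_2}-\Theta_k^{\sigma_1}$ is itself of order $\sigma_2-\sigma_1$, so an equally small change in $\Delta U_k$ can reverse the ordering. You need a one-sided condition on the common uncontrolled interval: for example, for $\sigma_1<\sigma_2$, $0<\Delta U_k^{\sigma_1}\le\Delta U_k^{\sigma_2}$, or $\partial_\sigma\log\Delta U_k>-\partial_\sigma\log\Theta_k$. This is a real restriction. Already
\[
U_k^E(t)=e^{-\sigma(T-t)}\Psi(E)+\int_t^T e^{-\sigma(s-t)}\bigl(\sigma U_k^I(s)+C_E\bigr)\,ds
\]
is not monotone in $\sigma$. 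Its $C_E$-part, $C_E(1-e^{-\sigma(T-t)})/\sigma$, grows as $\sigma$ decreases. For $\Psi\equiv0$, its $C_I$-part shrinks under the finite horizon. For large $C_E$ this channel works against the $\Theta$-channel, and $U_k^S$ moves with the whole equilibrium.

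Your mechanism for $\partial_\sigma\Theta_k\ge0$ also needs replacing. The variational system for $(e_k,i_k)=(\partial_\sigma E_k,\partial_\sigma I_k)$ has inhomogeneity $(-E_k,+E_k)$, which is not sign-definite, so Kamke's theorem does not apply in these variables. Without supercriticality the ordering is actually false: for $\beta=0$, $E(0)=\varepsilon$, $I(0)=0$ one gets $I^\sigma(t)=\frac{\sigma\varepsilon}{\sigma-\gamma}(e^{-\gamma t}-e^{-\sigma t})$, which is not increasing in $\sigma$ for large $t$. Proposition~\ref{prop:early_growth} cannot substitute, because ordered asymptotic growth rates give no pointwise ordering at finite times.

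What works is to linearize at $S\approx1$ and use $w=e+i$. This applies in the homogeneous case, and in the uncorrelated case after aggregating $E_k,I_k$ with weights $kP(k)/\langle k\rangle$, giving effective rate $b=\beta\langle k^2\rangle/\langle k\rangle$:
\[
\dot w=(b-\gamma)\,i,\qquad \dot i=\sigma w-(\sigma+\gamma)\,i+E,\qquad w(0)=i(0)=0.
\]
This system is cooperative exactly when $b\ge\gamma$, i.e.\ $R_0\ge1$, and its source $(0,E)$ is nonnegative, so $i(t)>0$ for $t>0$. This is where ``supercritical'' actually enters.

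You must still check that the ordering survives the nonlinearity on $[0,\min(t^*_{\sigma_1},t^*_{\sigma_2})]$. It need not hold globally, since faster progression also depletes $S$ sooner. Given that and the one-sided condition, strictness needs no transversality. If $\sigma_1<\sigma_2$ and $t^*_{\sigma_1}\le t^*_{\sigma_2}$, then $\max_k z_k^{\sigma_1}(t^*_{\sigma_1})<\max_k z_k^{\sigma_2}(t^*_{\sigma_1})\le1$. This contradicts $\max_k z_k^{\sigma_1}(t^*_{\sigma_1})=1$, which holds by continuity at a crossing.
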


\begin{proof}
By \eqref{eq:nk_scaling_degree}, the interior solution $n_k^{S*} < 1$ requires $\beta\,\Theta(t)\,\Delta U_k(t) > k^{\epsilon-1}$. At $t=0$ with $I(0)=\varepsilon \ll 1$, we have $\Theta(0) = O(\varepsilon)$, so $\beta\,\Theta(0)\,\Delta U_k(0) < k^{\epsilon-1}$ for small $\varepsilon$. Hence $n_k^{S*}(0) = 1$. The threshold $t^*_\sigma$ is defined by $\beta\,\Theta(t^*_\sigma)\,\Delta U_k(t^*_\sigma) = k^{\epsilon-1}$. Under the baseline definition $\Delta U_k(t)=r_I+U_k^E(t)-U_k^S(t)$, the gap is driven by the epidemic trajectory through the coupled backward HJB system. For smaller $\sigma$, $I_k(t)$ builds up more slowly (the mean $E$-sojourn time $\sigma^{-1}$ must elapse before $I$ grows), keeping $\Theta(t)$ smaller at early times. The product $\beta\,\Theta\,\Delta U_k$ therefore remains below $k^{\epsilon-1}$ for longer, so $t^*_\sigma$ increases as $\sigma$ decreases.
\end{proof}

\subsection{The SIR Limit ($\sigma\to\infty$)}

\begin{lemma}[Convergence to SIR-MFG]
\label{lem:sir_limit}
Fix all parameters except $\sigma$ and suppose $C_E = 0$, $\Psi\equiv 0$. As $\sigma\to\infty$:
\begin{enumerate}[label=\emph{(\arabic*)}]
\item $E_k(t)\to 0$ uniformly on $[0,T]$.
\item $\|U_k^{E,\sigma} - U_k^{I,\sigma}\|_\infty \to 0$.
\item $\Delta U_k^\sigma(t) \to \Delta U_k^{\mathrm{SIR}}(t) = r_I + U_k^{I,\infty}(t) - U_k^{S,\infty}(t)$ uniformly.
\item The SEIR-MFG Nash equilibrium converges to the SIR-MFG equilibrium of \cite{Bremaud2025} at rate $O(\sigma^{-1})$.
\end{enumerate}
\end{lemma}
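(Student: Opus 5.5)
The argument is a singular-perturbation estimate on the coupled forward--backward system. I would carry it out in the order (1), (2), (3), (4), because each part feeds the next. Throughout I rely on the uniform bounds of the regularity lemma: populations lie in $[0,1]$, the $U$'s are bounded independently of $\sigma$, and controls lie in $[\mathfrak n_{\min},1]$. I also use the fact that $n^{E*}\equiv 1$ by \eqref{eq:opt_E}, so $f_k(n^E)=0$ in the exposed HJB.

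\textbf{Step 1 (forward layer).} From \eqref{eq:E_dot}, Duhamel gives
\[
E_k(t)=e^{-\sigma t}E_k(0)+\int_0^t e^{-\sigma(t-s)}\lambda_k S_k\,ds.
\]
The source is bounded by $\beta K$ uniformly in $\sigma$, so the integral is $O(\sigma^{-1})$. This yields $E_k(t)\le e^{-\sigma t}E_k(0)+\beta K/\sigma$. To obtain uniform convergence on $[0,T]$, including $t=0$, I need $E_k(0)\to0$. I will therefore adopt the natural SIR-compatible convention in which the initial exposed mass is moved into $I$, i.e.\ $I_k^{\mathrm{SIR}}(0)=E_k(0)+I_k(0)$. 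Without that convention the statement only holds away from $t=0$, and I would note this.

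\textbf{Step 2 (backward layer).} Let $D_k:=U_k^E-U_k^I$. Using \eqref{eq:HJB_E} with $C_E=0$ and \eqref{eq:HJB_I},
\[
-\dot D_k=-\sigma D_k-\gamma(U_k^R-U_k^I)-C_I, \qquad D_k(T)=\Psi(E)-\Psi(I)=0.
\]
Solving backward,
\[
D_k(t)=\int_t^T e^{-\sigma(s-t)}\bigl(\gamma(U_k^R-U_k^I)+C_I\bigr)(s)\,ds.
\]
Since the integrand is bounded uniformly in $\sigma$, this gives $\|D_k\|_\infty\le (\gamma C_U+C_I)/\sigma$. The equations for $U^I,U^R$ do not involve $\sigma$ and coincide with the SIR ones, so $U^{I,\sigma}=U^{I,\infty}$. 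This proves (2) at rate $O(\sigma^{-1})$.

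\textbf{Step 3 (reduced system).} Set $\tilde I_k:=E_k+I_k$. Then
\[
\dot{\tilde I}_k=\lambda_kS_k-\gamma I_k=\lambda_kS_k-\gamma\tilde I_k+\gamma E_k .
\]
Also, $\Theta_k=\sum G_{kk'}I_{k'}=\sum G_{kk'}\tilde I_{k'}+O(\|E\|)$. The SEIR forward system for $(S,\tilde I,R)$ is therefore the SIR forward system perturbed by $O(\sigma^{-1})$ terms, in $L^1$ and pointwise after the initial layer.

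Likewise, the $U^S$ equation \eqref{eq:HJB_S} has jump term $r_I+U^E-U^S=r_I+U^I-U^S+D_k$. It is thus the SIR HJB perturbed by $D_k=O(\sigma^{-1})$. Part (3) then follows by comparing the two backward ODEs with Gr\"onwall, once the forward trajectories are known to be close.

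\textbf{Step 4 (equilibrium convergence) and the main obstacle.} For (4), the difficulty is that the SEIR and SIR equilibria are fixed points of two different maps. Closeness of the maps does not imply closeness of fixed points without a stability mechanism.

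I would proceed exactly as in the Continuous Dependence theorem, treating $\sigma^{-1}$ as the perturbation parameter $\delta$. Define $\Phi(t)$ as the sum of $|\delta S_k|,|\delta\tilde I_k|,|\delta U_k^S|$, where $\delta$ now denotes the SEIR--SIR difference. The Lipschitz bound on the clipped best response, namely $g$ from Step 4 of the existence proof, gives $|\delta n_k^S|\le L(|\delta\Theta_k|+|\delta\Delta U_k|)$. Substituting the $O(\sigma^{-1})$ residuals from Steps 1--3 into the forward and backward Gr\"onwall estimates gives the rate $O(\sigma^{-1})$.

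The obstacle is the forward--backward coupling. Forward differences are controlled by $\int|\delta n|$, which is in turn controlled by backward differences, and vice versa. A plain Gr\"onwall argument closes only when $T$ is small, or under the monotonicity and uniqueness hypothesis of the uniqueness theorem. I would first try to invoke that monotonicity assumption, including strong convexity with constant $\alpha$, to absorb the cross terms as in the uniqueness proof. This produces a bound of the form $\alpha\|\delta n\|_{L^2}^2\le C\sigma^{-1}\|\delta n\|_{L^2}+C\sigma^{-2}$, and hence $\|\delta n\|=O(\sigma^{-1})$. If that fails, I would fall back to a short-horizon statement, or to convergence along subsequences via Arzel\`a--Ascoli: limit points solve the SIR-MFG system, and uniqueness there gives convergence of the whole family, though without an explicit rate.
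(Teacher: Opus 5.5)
\paragraph{Parts (1)--(2).} Your Steps 1--2 are the paper's argument for these parts: Duhamel for $E_k$, and backward integration of $U^E_k-U^I_k$ against the kernel $e^{-\sigma(s-t)}$. Your formula has a sign slip: solving $\dot D_k=\sigma D_k+h_k$ with $D_k(T)=0$ gives $D_k(t)=-\int_t^T e^{-\sigma(s-t)}h_k(s)\,ds$, so $U^E_k\le U^I_k$ when $C_I\ge 0$. This does not affect the $O(\sigma^{-1})$ bound. Your caveat at $t=0$ is correct, and the paper's proof overlooks it. With the paper's data $E_k(0)=\varepsilon_k$ held fixed, $E_k^\sigma(0)=\varepsilon_k$ for every $\sigma$. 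So (1) holds only uniformly on $[\delta,T]$, and at rate $O(\sigma^{-1})$ in $L^1(0,T)$. Setting $I^{\mathrm{SIR}}_k(0)=E_k(0)+I_k(0)$ identifies the correct limit problem, but it does not make $E^\sigma_k(0)$ small. Uniform convergence on $[0,T]$ requires putting the initial infected mass in $I$ in the SEIR system itself. Your observation that $U^{I,\sigma}_k=U^{I,\mathrm{SIR}}_k$ exactly is useful: it shows (3) is equivalent to $U^{S,\sigma}_k\to U^{S,\mathrm{SIR}}_k$, so (3) is as hard as (4).

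\paragraph{The gap in (3)--(4).} These parts reduce to a stability estimate for the forward--backward fixed point, and you do not supply one under the lemma's hypotheses.
\begin{itemize}
\item Chaining Gr\"onwall through your Step 3 residuals gives only $\|\delta n\|_{L^1}\le C_T\|\delta n\|_{L^1}+C_T\sigma^{-1}$. Here $C_T$ grows with $T$ and is not small in general, so this closes only for short horizons.
\item The monotonicity route is not a hypothesis of the lemma. Moreover, the paper's uniqueness proof never produces a sign on the cross terms. It ends at $\alpha\|\delta n\|_{L^2}^2\le C_7\|\delta n\|_{L^1}^2\le C_7T\|\delta n\|_{L^2}^2$, so it too needs $C_7T<\alpha$. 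Your target $\alpha\|\delta n\|_{L^2}^2\le C\sigma^{-1}\|\delta n\|_{L^2}+C\sigma^{-2}$ would require a genuine Lasry--Lions-type sign condition, which is neither verified nor assumed for this model.
\item The compactness fallback needs uniqueness of the SIR-MFG equilibrium, which is established nowhere. The statement's phrase ``the SIR-MFG equilibrium'' already presupposes it.
\item That fallback must also handle the initial layer: $\dot I^\sigma_k=\sigma E^\sigma_k-\gamma I^\sigma_k$ is not bounded uniformly in $\sigma$ near $t=0$, so Arzel\`a--Ascoli applies only on $[\delta,T]$. It also gives no rate.
\end{itemize}
So your proposal does not prove (4) at rate $O(\sigma^{-1})$ as stated.

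\paragraph{The paper does not close this gap either.} It writes $\Delta U^\sigma_k=r_I+U^{I,\sigma}_k-U^{S,\sigma}_k+O(\sigma^{-1})$ and declares the limit to be the SIR gap, which presupposes $U^{S,\sigma}_k\to U^{S,\mathrm{SIR}}_k$. It then derives (4) from (3), which is circular. The missing ingredient is therefore an extra hypothesis on the statement: a short horizon or small coupling, or a stability assumption on the SIR-MFG equilibrium. Under such a hypothesis, your Step 3 reduction via $\tilde I_k=E_k+I_k$, with $L^1$-small residuals $\gamma E_k$ and $\sum_{k'}G_{kk'}E_{k'}$, is the right framework.
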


\begin{proof}
From \eqref{eq:E_dot}, $\dot E_k = \lambda_k S_k - \sigma E_k$. By variation of constants, $E_k = O(\sigma^{-1})$ uniformly once $\sigma > \|\lambda_k\|_\infty$, establishing~(1). Set $w_k = U_k^I - U_k^E$. Then $w_k(T)=0$ and
\[
\dot w_k = \sigma w_k + \bigl[\gamma(U_k^R - U_k^I) + C_I\bigr].
\]
The source is bounded uniformly in $\sigma$ by the boundedness lemma. Integrating backward from $T$:
\[
|w_k(t)| \;\leq\; \frac{M}{\sigma}(1-e^{-\sigma(T-t)}) \;\leq\; \frac{M}{\sigma},
\]
establishing~(2). Then $\Delta U_k^\sigma = r_I + U_k^{E,\sigma} - U_k^{S,\sigma} = r_I + U_k^{I,\sigma} - U_k^{S,\sigma} + O(\sigma^{-1})$. The limiting gap $r_I + U_k^{I,\infty} - U_k^{S,\infty}$ is the SIR-MFG value gap $\Delta U_k^{\mathrm{SIR}}$, establishing~(3). Since $\Delta U_k^\sigma \to \Delta U_k^{\mathrm{SIR}}$ uniformly, the optimal policy \eqref{eq:nk_scaling_degree} converges $n_k^{S,\sigma} \to n_k^{S,\mathrm{SIR}}$; combined with~(1), the forward dynamics converge to the SIR system, establishing~(4).
\end{proof}

\section{Numerical Simulations}
\label{sec:numerical}

We solve the coupled forward-backward SEIR-MFG system by a standard forward-backward sweep (FBS) iteration, alternating between (i) simulating the population dynamics forward in time under a candidate policy, (ii) solving the HJB system backward in time under the resulting epidemic trajectory, and (iii) updating the policy by pointwise minimization of the Hamiltonians. A relaxation parameter $\omega\in(0,1]$ prevents oscillations.

\subsection{Iterative Forward-Backward Sweep Algorithm}

Fix a time grid $0=t_0<t_1<\cdots<t_M=T$, a tolerance $\mathrm{tol}>0$, and a relaxation parameter $\omega\in(0,1]$. Starting from $n_k^{S,(0)}(t_i)\equiv 1$, repeat the following steps until convergence.

\textbf{Forward step.} Given $n^{S,(m)}$, solve the forward Kolmogorov system for $$(S_k^{(m+1)},E_k^{(m+1)},I_k^{(m+1)},R_k^{(m+1)})$$ and compute $\Theta_k^{(m+1)}(t)$.

\textbf{Backward step.} Solve the HJB system backward from $t_M$ to $t_0$ to obtain $$(U_k^{S,(m+1)},U_k^{E,(m+1)},U_k^{I,(m+1)},U_k^{R,(m+1)}).$$

\textbf{Policy update.} Compute the closed-form minimizer $\tilde{n}_k^{S,(m+1)}$ via \eqref{eq:n_closed_form}, and apply relaxation:
\[
n_k^{S,(m+1)} \leftarrow (1-\omega)\,n_k^{S,(m)}+\omega\,\tilde n_k^{S,(m+1)}.
\]
Stop when $\max_{k,i}|n_k^{S,(m+1)}(t_i)-n_k^{S,(m)}(t_i)|<\mathrm{tol}$.

\subsection{Simulation Results}

We present results for homogeneous networks with parameters $\gamma=1$, $\sigma=2$, $r_I=50$, $\mathfrak{n}_{\text{min}}=0.1$, $\varepsilon=1$, and $\beta=4/k$ (so that $\beta k = 4$ is constant across degree classes, matching the normalization convention of \cite{Bremaud2025}). All simulations reach Nash equilibrium convergence in 35--44 FBS iterations (tolerance $10^{-7}$). We use a time step $\Delta t=0.05$. Figure~\ref{fig:seir_vs_sir} compares the SEIR-MFG and SIR-MFG Nash equilibria across degrees $k\in\{4,6,8,12,20\}$. The SEIR epidemic peaks later and the optimal effort $n^{S*}(t)$ stays closer to~1 (weaker precautions), consistent with Theorem~6.3. Figure~\ref{fig:sigma} shows the sensitivity of the equilibrium to the incubation rate $\sigma$: as $\sigma$ decreases (longer latency), the epidemic peak is further delayed and precautionary effort is progressively attenuated, illustrating the strategic delay mechanism.
\begin{figure}[htbp]
    \centering
    \includegraphics[width=\textwidth]{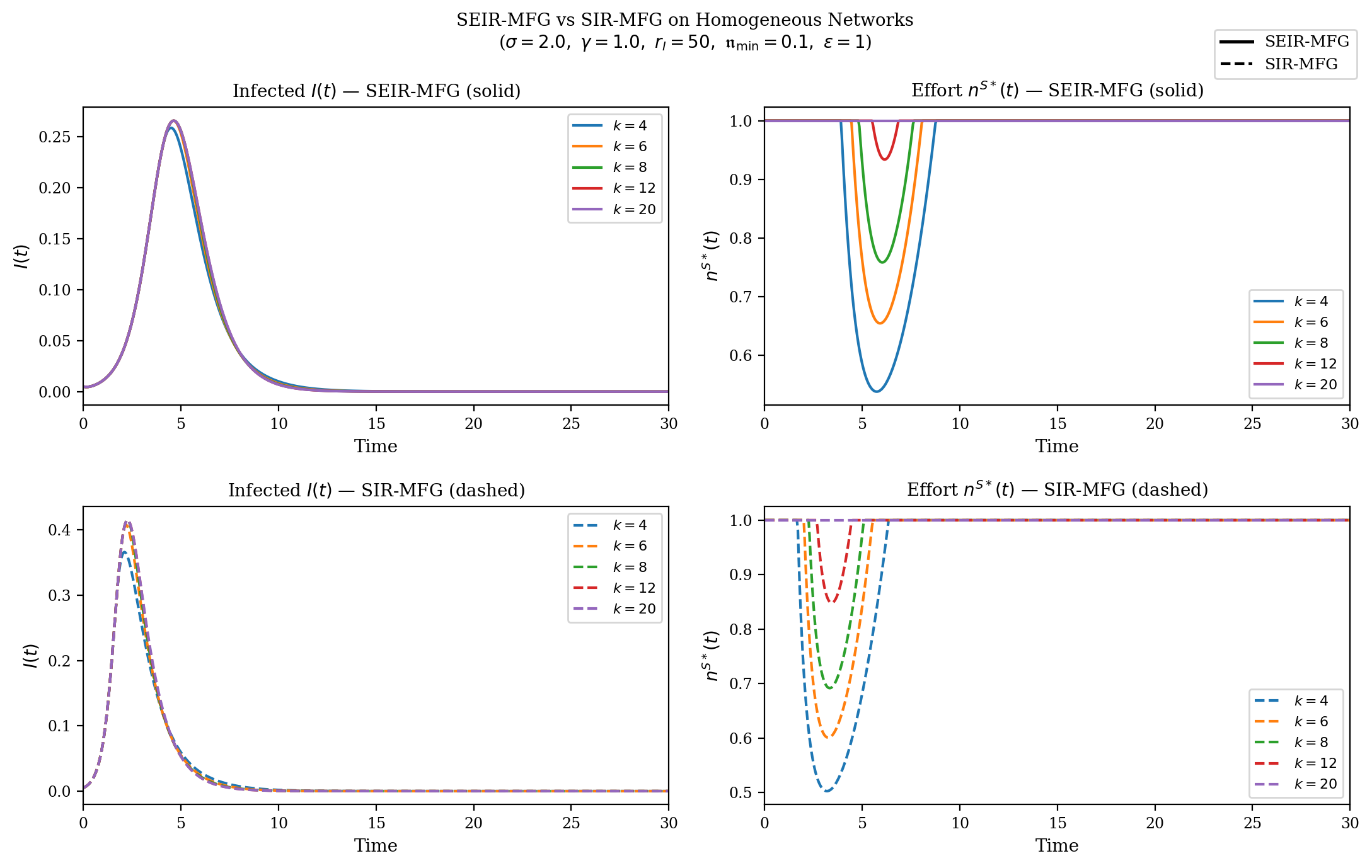}
    \caption{Nash equilibrium dynamics for SEIR-MFG (solid) and SIR-MFG (dashed) on homogeneous networks with $k\in\{4,6,8,12,20\}$. \textbf{Top row:} infected fraction $I(t)$ (left) and effort $n^{S*}(t)$ (right) for SEIR-MFG. \textbf{Bottom row:} same quantities for SIR-MFG. SEIR epidemics peak later and exhibit weaker behavioral responses, consistent with Theorem \ref{thm:comparison} (Early-phase growth and Attenuated Response).}
    \label{fig:seir_vs_sir}
\end{figure}

\begin{figure}[htbp]
    \centering
    \includegraphics[width=0.95\textwidth]{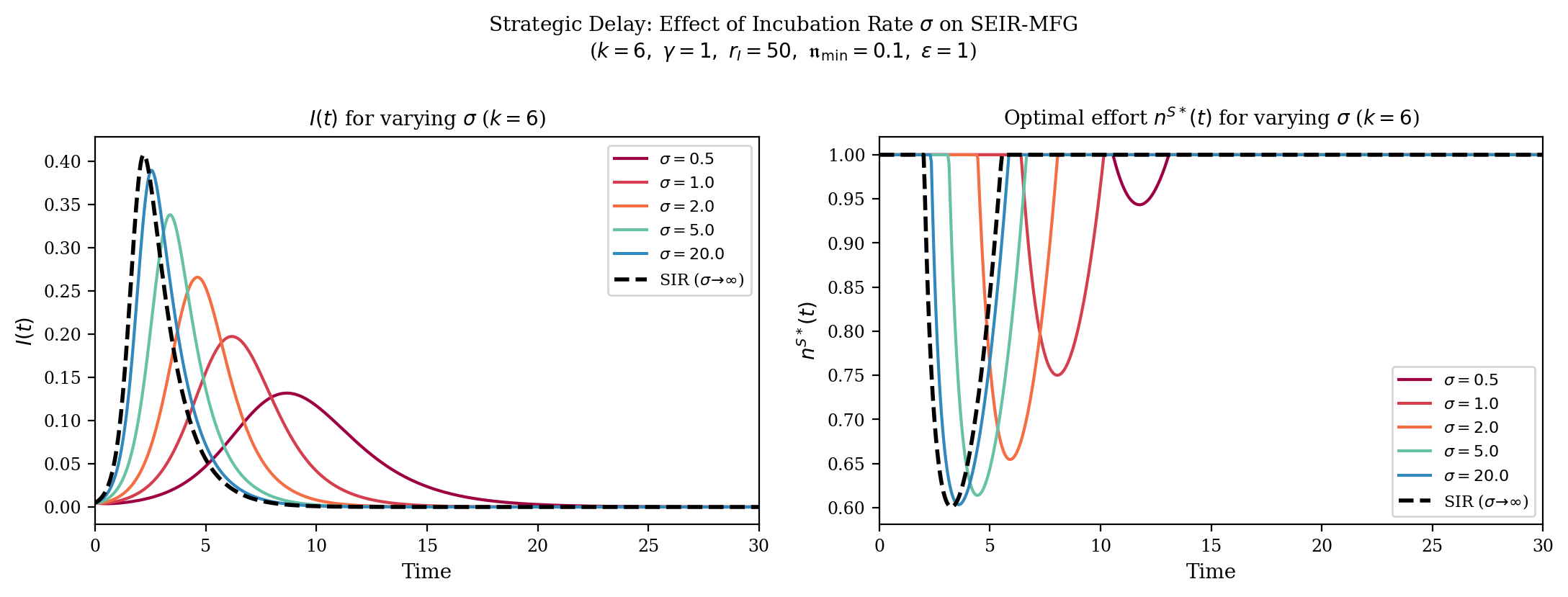}
    \caption{Effect of the incubation rate $\sigma$ on the SEIR-MFG Nash equilibrium ($k=6$). Decreasing $\sigma$ (longer latency) delays the epidemic peak and attenuates the optimal effort, quantifying the ``strategic delay'' effect. The SIR-MFG limit ($\sigma\to\infty$, dashed black) exhibits the strongest and earliest behavioral response.}
    \label{fig:sigma}
\end{figure}
Figure~\ref{fig:detailed} provides a detailed comparison for $k=8$, showing all four SEIR compartments, a direct overlay of SEIR-MFG vs.\ SIR-MFG effort, and the cumulative infection probability $\varphi(t)=1-S(t)$.
\begin{figure}[htbp]
    \centering
    \includegraphics[width=\textwidth]{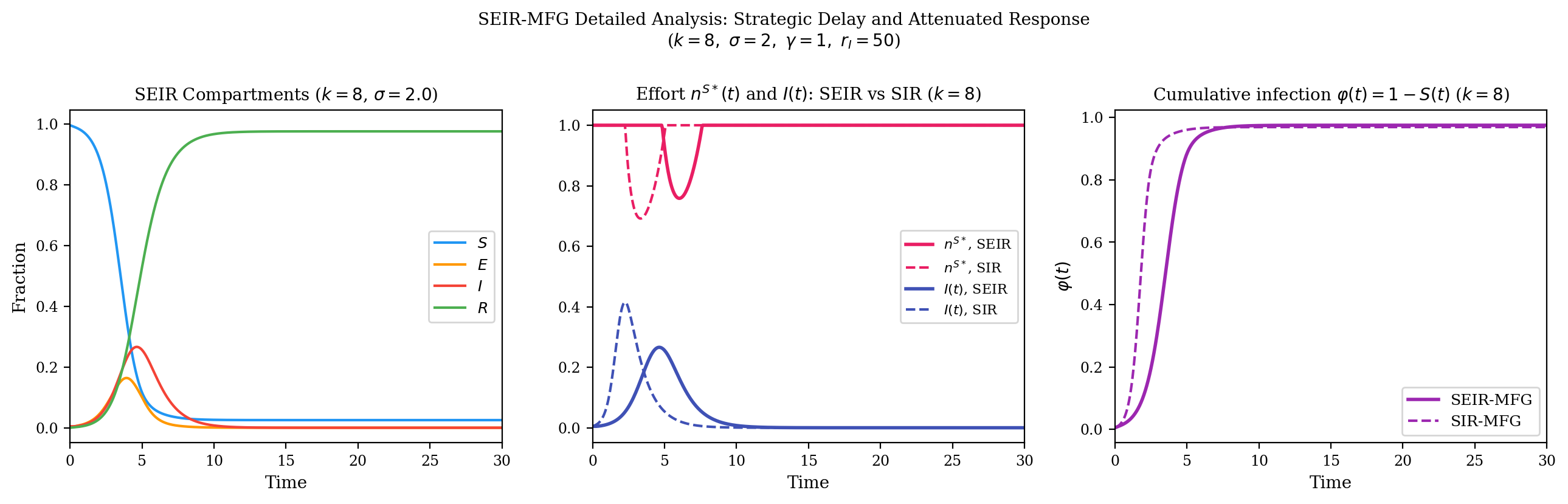}
    \caption{Detailed SEIR-MFG analysis for $k=8$, $\sigma=2$. \textbf{Left:} All four compartments $S,E,I,R$. \textbf{Center:} Optimal effort $n^{S*}(t)$ and infected fraction $I(t)$ for SEIR-MFG (solid) vs.\ SIR-MFG (dashed). \textbf{Right:} Cumulative infection probability $\varphi(t)=1-S(t)$; the larger final value for SEIR-MFG confirms Theorem~6.3 (Larger Final Size).}
    \label{fig:detailed}
\end{figure}
Figure~\ref{fig:bars} summarizes the peak timing and final outbreak size across all degree classes.
\begin{figure}[htbp]
    \centering
    \includegraphics[width=0.95\textwidth]{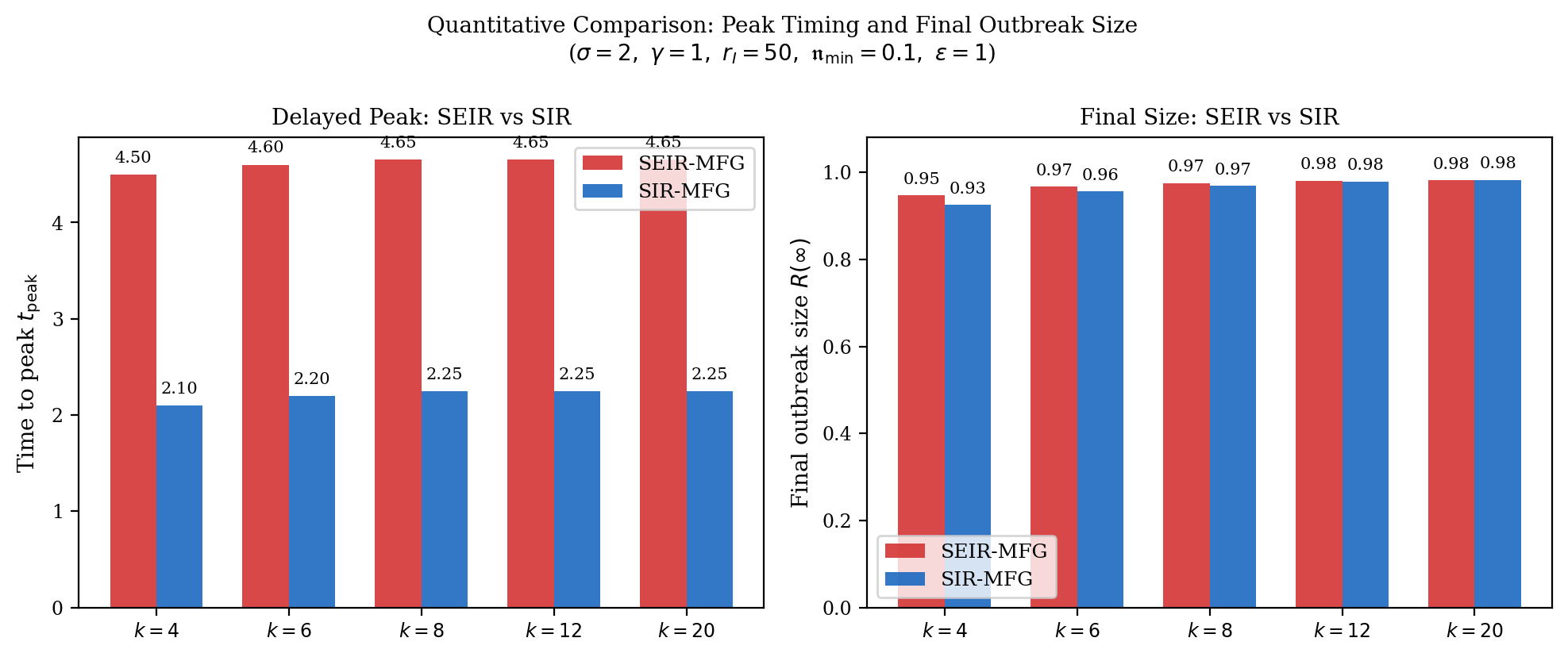}
    \caption{Quantitative comparison of peak timing $t_{\mathrm{peak}}$ (left) and final outbreak size $R(\infty)$ (right) for SEIR-MFG vs.\ SIR-MFG across all degree classes $k\in\{4,6,8,12,20\}$. SEIR-MFG peaks later and yields a larger final size for $k\leq 12$. For $k=20$, both models saturate at $n^{S*}=\mathfrak{n}_{\min}$, removing strategic differentiation; see Table in Section~\ref{sec:numerical}.}
    \label{fig:bars}
\end{figure}
The numerical results confirm the claims of Theorem~\ref{thm:comparison}. The quantitative summary (simulation parameters: $\Delta t=0.05$, $\omega=0.3$, tol $=10^{-7}$; converging in 35--44 FBS iterations for $k\le 12$) is:

\begin{center}
\begin{tabular}{r|cc|cc}
$k$ & $t_{\mathrm{peak}}^{\mathrm{SEIR}}$ & $t_{\mathrm{peak}}^{\mathrm{SIR}}$ & $R^{\mathrm{SEIR}}(\infty)$ & $R^{\mathrm{SIR}}(\infty)$ \\
\hline
4  & 4.50 & 2.10 & 0.9465 & 0.9253 \\
6  & 4.60 & 2.20 & 0.9667 & 0.9558 \\
8  & 4.65 & 2.25 & 0.9749 & 0.9691 \\
12  & 4.65 & 2.25 & 0.9810 & 0.9792 \\
20  & 4.65 & 2.25 & 0.9817 & 0.9825 \\
\end{tabular}
\end{center}
{\footnotesize $^{\dagger}$For $k=20$ with $\varepsilon=1$, the infection pressure $\beta k\Theta\Delta U > 1$, so both SEIR and SIR models saturate at the lower boundary $n^{S*}=\mathfrak{n}_{\min}=0.1$ for most of the epidemic. Dynamics become essentially identical (converged in 1 iteration), and $R^{\mathrm{SIR}}(\infty)$ slightly exceeds $R^{\mathrm{SEIR}}(\infty)$ because the boundary saturation removes the strategic differentiation.}

\section{Conclusion and Future Directions}
\label{sec:conclusion}

In this paper, we developed a mean-field game formulation for SEIR dynamics on heterogeneous contact networks with degree-dependent behavior. We derived the coupled forward--backward equilibrium system and obtained explicit best-response structure for susceptible contact effort under the convex isolation cost. A key qualitative consequence of the exposed compartment is that the incubation stage separates infection from infectiousness and shifts incentives over time; in the baseline formulation, exposed agents optimally maintain full contact, while susceptible agents reduce contact according to a value-gap criterion. We proved existence of equilibrium via a Schauder fixed-point argument and provided a uniqueness result under an appropriate monotonicity condition. Numerical experiments illustrate how degree heterogeneity, incubation, and the cost exponent influence equilibrium effort profiles and epidemic outcomes.

Several extensions are natural and would broaden the applicability of the framework. First, one can enrich the information structure: agents may observe only noisy aggregate signals, which leads to partially observed control and filtering-coupled HJB systems. Second, one can incorporate additional policy levers---testing, isolation mandates, and contact tracing---that act directly on exposed individuals and create state-dependent constraints or penalties; this would clarify when and how nontrivial exposed effort can arise as an equilibrium outcome. Third, the network environment can be made time-dependent by allowing mobility-driven mixing or adaptive rewiring, so that contact opportunities evolve endogenously with perceived risk and public interventions. Fourth, heterogeneity beyond degree can be incorporated to study targeted incentives and distributional outcomes.  Finally, empirical calibration and validation using behavioral and epidemic data would help identify which isolation-cost specifications reproduce observed contact reductions across connectivity classes.

Lastly, another future avenue will be incorporating more types of human behavior, such as compliance vs.\ noncompliance and age-structured population. The SIR-based models incorporating such behavior have been studied from various aspects including ODE, reaction-diffusion PDEs, optimal control, and stochastic differential equations \cite{PW1, PW2, PW3, PW4, pang1, pang2, pang3}.

\section*{Acknowledgments}
This research was partially supported by the Simons Foundation Travel Support for Mathematicians (No. 0007730).

	\bibliographystyle{alpha}

\begin{thebibliography}{99}
		

		
		\bibitem{Bremaud2025}
L.~Bremaud, O.~Giraud, and D.~Ullmo.
\newblock Mean-field game approach to epidemic propagation on networks.
\newblock {\em Phys. Rev. E}, 112(5):L052301, 2025.
\newblock doi:10.1103/mys6-fznc.
		
		\bibitem{Liu2023}
		H.~Liu and X.~Tian.
		\newblock Data-driven optimal control of a SEIR model for COVID-19.
		\newblock {\em Communications on Pure and Applied Analysis}, 22(1):19--39, 2023.
		
		\bibitem{Sherborne2018}
		N.~Sherborne, J.~C. Miller, K.~B. Blyuss, and I.~Z. Kiss.
		\newblock Mean-field models for non-Markovian epidemics on networks.
		\newblock {\em Journal of Mathematical Biology}, 76:755--778, 2018.
		
		\bibitem{Lasry2007}
		J.-M. Lasry and P.-L. Lions.
		\newblock Mean field games.
		\newblock {\em Japanese Journal of Mathematics}, 2(1):229-260, 2007.
		
		\bibitem{Elie2020}
		R.~Elie, E.~Hubert, and G.~Turinici.
		\newblock Contact rate epidemic control of COVID-19: an equilibrium view.
		\newblock {\em Mathematical Modelling of Natural Phenomena}, 15:35, 2020.
		
		\bibitem{Chang2020}
		S.~L. Chang, M.~Piraveenan, P.~Pattison, and M.~Prokopenko.
		\newblock Game theoretic modelling of infectious disease dynamics and intervention methods: a review.
		\newblock {\em Journal of Biological Dynamics}, 14(1):57-89, 2020.
		
		\bibitem{Li2025}
		J.~Li, Z.~Jin, and M.~Tang.
		\newblock Analysis of the SEIR mean-field model in dynamic networks under intervention.
		\newblock {\em Infectious Disease Modelling}, 10:850--874, 2025.

        \bibitem{PW3}
C.~Parkinson and W.~Wang,  A compartmental model for epidemiology with human behavior and stochastic effects, \textit{Math. Biosci.}, vol.~392, p.~109588, 2026.

\bibitem{PW1}
C.~Parkinson and W.~Wang, Analysis of a reaction-diffusion SIR epidemic model with noncompliant behavior, \textit{SIAM J. Appl. Math.}, vol.~83, no.~5, pp.~1969--2002, 2023.

\bibitem{PW2}
M.~Bongarti, C.~Parkinson, and W.~Wang, Optimal control of a reaction-diffusion epidemic model with non-compliance, \textit{Eur. J. Appl. Math.}, 2025.

\bibitem{ParkinsonRoy2026FP}
C.~Parkinson and S.~Roy.
\newblock \emph{A Fokker--Planck Framework for Control of Epidemics}.
\newblock arXiv preprint arXiv:2601.20181, 2026.

\bibitem{pang1}
G.~Pang and \'{E}.~Pardoux, {PDE} model for multi-patch epidemic models with migration and infection-age dependent infectivity, \emph{Pure Appl. Funct. Anal.} \textbf{9} (2024), no.~3, 863--897.

\bibitem{pang2}
G.~Pang and \'{E}.~Pardoux, {SPDE} for stochastic {SIR} epidemic models with infection-age dependent infectivity, \emph{Stochastics and Partial Differential Equations: Analysis and Computations} (2025), 1--54, {doi:10.1007/s40072-025-00403-x}.

\bibitem{pang3}
G.~Pang, \'{E}.~Pardoux, and A.~Velleret, Stochastic {SIR} model with individual heterogeneity and infection-age dependent infectivity on large non-homogeneous random graphs, \emph{arXiv preprint} arXiv:2502.04225 (2025).

\bibitem{PW4}
C.~Ngo, C.~Parkinson, and W.~Wang, Optimal control of an SIR model with noncompliance as a social contagion, \textit{arXiv preprint arXiv:2509.09075}, 2025.

\bibitem{Kermack1927}
W.~O. Kermack and A.~G. McKendrick.
\newblock A contribution to the mathematical theory of epidemics.
\newblock {\em Proc. Roy. Soc. London Ser. A}, 115:700--721, 1927.
\newblock doi:10.1098/rspa.1927.0118.

\bibitem{AndersonMay1991}
R.~M. Anderson and R.~M. May.
\newblock {\em Infectious Diseases of Humans: Dynamics and Control}.
\newblock Oxford University Press, 1991.

\bibitem{Funk2010}
S.~Funk, M.~Salath\'e, and V.~A.~A. Jansen.
\newblock Modelling the influence of human behaviour on the spread of infectious diseases: a review.
\newblock {\em J. R. Soc. Interface}, 7(50):1247--1256, 2010.
\newblock doi:10.1098/rsif.2010.0142.

\bibitem{Verelst2016}
F.~Verelst, L.~Willem, and P.~Beutels.
\newblock Behavioural change models for infectious disease transmission: a systematic review (2010--2015).
\newblock {\em J. R. Soc. Interface}, 13(125):20160820, 2016.
\newblock doi:10.1098/rsif.2016.0820.

\bibitem{Fenichel2011}
E.~P. Fenichel, C.~Castillo-Chavez, M.~G. Ceddia, et al.
\newblock Adaptive human behavior in epidemiological models.
\newblock {\em Proc. Natl. Acad. Sci. USA}, 108(15):6306--6311, 2011.
\newblock doi:10.1073/pnas.1011250108.

\bibitem{Reluga2010}
T.~C. Reluga.
\newblock Game theory of social distancing in response to an epidemic.
\newblock {\em PLoS Comput. Biol.}, 6(5):e1000793, 2010.
\newblock doi:10.1371/journal.pcbi.1000793.

\bibitem{Bauch2004}
C.~T. Bauch and D.~J.~D. Earn.
\newblock Vaccination and the theory of games.
\newblock {\em Proc. Natl. Acad. Sci. USA}, 101(36):13391--13394, 2004.
\newblock doi:10.1073/pnas.0403823101.

\bibitem{Eksin2017}
C.~Eksin, J.~S. Shamma, and J.~S. Weitz.
\newblock Disease dynamics in a stochastic network game: a little empathy goes a long way in averting outbreaks.
\newblock {\em Sci. Rep.}, 7:44122, 2017.
\newblock doi:10.1038/srep44122.

\bibitem{KeelingEames2005}
M.~J. Keeling and K.~T.~D. Eames.
\newblock Networks and epidemic models.
\newblock {\em J. R. Soc. Interface}, 2(4):295--307, 2005.
\newblock doi:10.1098/rsif.2005.0051.

\bibitem{Newman2002Spread}
M.~E.~J. Newman.
\newblock Spread of epidemic disease on networks.
\newblock {\em Phys. Rev. E}, 66:016128, 2002.
\newblock doi:10.1103/PhysRevE.66.016128.

\bibitem{PastorSatorras2015}
R.~Pastor-Satorras, C.~Castellano, P.~Van Mieghem, and A.~Vespignani.
\newblock Epidemic processes in complex networks.
\newblock {\em Rev. Mod. Phys.}, 87:925--979, 2015.
\newblock doi:10.1103/RevModPhys.87.925.

\bibitem{Volz2008}
E.~Volz.
\newblock SIR dynamics in random networks with heterogeneous connectivity.
\newblock {\em J. Math. Biol.}, 56(3):293--310, 2008.
\newblock doi:10.1007/s00285-007-0116-4.

\bibitem{Boguna2002}
M.~Bogu\~n\'a and R.~Pastor-Satorras.
\newblock Epidemic spreading in correlated complex networks.
\newblock {\em Phys. Rev. E}, 66:047104, 2002.
\newblock doi:10.1103/PhysRevE.66.047104.

\bibitem{Kiss2017}
I.~Z. Kiss, J.~C. Miller, and P.~L. Simon.
\newblock {\em Mathematics of Epidemics on Networks: From Exact to Approximate Models}.
\newblock Interdisciplinary Applied Mathematics, vol.~46. Springer, 2017.
\newblock doi:10.1007/978-3-319-50806-1.

\bibitem{Huang2006}
M.~Huang, R.~P. Malham\'e, and P.~E. Caines.
\newblock Large population stochastic dynamic games: closed-loop McKean--Vlasov systems and the Nash certainty equivalence principle.
\newblock {\em Commun. Inf. Syst.}, 6(3):221--252, 2006.

\bibitem{CarmonaDelarue2018}
R.~Carmona and F.~Delarue.
\newblock {\em Probabilistic Theory of Mean Field Games with Applications I--II}.
\newblock Probability Theory and Stochastic Modelling, vols.~83--84. Springer, 2018.

\bibitem{GomesSaude2014}
D.~A. Gomes and J.~Sa\'ude.
\newblock Mean field games models---a brief survey.
\newblock {\em Dyn. Games Appl.}, 4(2):110--154, 2014.
\newblock doi:10.1007/s13235-013-0099-2.

\bibitem{Achdou2010}
Y.~Achdou and I.~Capuzzo-Dolcetta.
\newblock Mean field games: numerical methods.
\newblock {\em SIAM J. Numer. Anal.}, 48(3):1136--1162, 2010.
\newblock doi:10.1137/090758477.

\bibitem{Cardaliaguet2017}
P.~Cardaliaguet and S.~Hadikhanloo.
\newblock Learning in mean field games: the fictitious play.
\newblock {\em ESAIM Control Optim. Calc. Var.}, 23(2):569--591, 2017.
\newblock doi:10.1051/cocv/2016004.

\bibitem{CarmonaLauri2019}
R.~Carmona, M.~Lauri\`ere, and Z.~Tan.
\newblock Model-free mean-field reinforcement learning: mean-field MDP and mean-field Q-learning.
\newblock {\em arXiv:1910.12802}, 2019.

\bibitem{Elie2020}
R.~Elie, E.~Hubert, and G.~Turinici.
\newblock Contact rate epidemic control of COVID-19: an equilibrium view.
\newblock {\em Math. Model. Nat. Phenom.}, 15:35, 2020.
\newblock doi:10.1051/mmnp/2020022.

\bibitem{Cho2020}
S.~Cho.
\newblock Mean-field game analysis of SIR model with social distancing.
\newblock {\em arXiv:2005.06758}, 2020.

\bibitem{Bremaud1981}
P.~Br\'emaud.
\newblock {\em Point Processes and Queues: Martingale Dynamics}.
\newblock Springer, 1981.

\bibitem{Bremaud2022}
L.~Bremaud and D.~Ullmo.
\newblock Social structure description of epidemic propagation with a mean-field game paradigm.
\newblock {\em Phys. Rev. E}, 106:L062301, 2022.
\newblock doi:10.1103/PhysRevE.106.L062301.

\bibitem{Bremaud2024NPI}
L.~Bremaud, O.~Giraud, and D.~Ullmo.
\newblock Mean-field-game approach to nonpharmaceutical interventions in a social-structure model of epidemics.
\newblock {\em Phys. Rev. E}, 110:064301, 2024.
\newblock doi:10.1103/PhysRevE.110.064301.


\bibitem{He2020}
X.~He, E.~H. Lau, P.~Wu, et al.
\newblock Temporal dynamics in viral shedding and transmissibility of COVID-19.
\newblock {\em Nat. Med.}, 26(5):672--675, 2020.
\newblock doi:10.1038/s41591-020-0869-5.
		
	\end{thebibliography}

\end{document}